\newtheorem{remark}{Remark}
\newtheorem{corollary}{Corollary}
\newtheorem{lemma}{Lemma}
\newcommand{\lb} {\left}
\newcommand{\rb} {\right}
\newcommand{\nn} {\nonumber}
\newcommand{\norm}[1]{\left\lVert#1\right\rVert}
\newcommand{\abs}[1]{\left\lvert#1\right\rvert}
\newcommand{\diag}{\operatorname{diag}}
\newcommand{\cov}{\operatorname{cov}}
\begin{document}
\title{Jointly Optimal RIS Placement and Power Allocation for Underlay D2D Communications:\\ An Outage Probability Minimization Approach}
\author
{Sarbani Ghose,~\IEEEmembership{Member,~IEEE},
Deepak Mishra,~\IEEEmembership{Senior Member,~IEEE},\\
Santi P. Maity,~\IEEEmembership{Member, IEEE}, and George C. Alexandropoulos,~\IEEEmembership{Senior Member, IEEE}
\thanks{S. Ghose is with DSZ Innovation Laboratories Private Limited, Kolkata, India, (email:sarbani.ghose@dszlabs.com).}
\thanks{D. Mishra is with the School of Electrical Engineering and Telecommunications at the University of New 
South Wales (UNSW) Sydney, NSW 2052, Australia, (email: d.mishra@unsw.edu.au).}
\thanks{S. P. Maity is with the Department of Information Technology, Indian Institute of Engineering Science and Technology, 
Shibpur, Howrah, West Bengal, 711103, India, (email: santipmaity@it.iiests.ac.in).}
\thanks{G. C. Alexandropoulos is with the Department of Informatics and Telecommunications, 
National and Kapodistrian University of Athens, Greece, (email: alexandg@di.uoa.gr).}
\thanks{The work of Prof. Alexandropoulos has been supported by the EU H2020 RISE-6G project under grant number 101017011.}}
\maketitle
\begin{abstract}
In this paper, we study underlay device-to-device (D2D) communication systems empowered by a reconfigurable 
intelligent surface (RIS) for cognitive cellular networks. Considering Rayleigh fading channels and the general 
case where there exist both the direct and RIS-enabled D2D channels, the outage probability (OP) of 
the D2D communication link is presented in closed-form. 
Next, for the considered RIS-empowered underlaid D2D system, we frame an OP minimization problem. 
We target the joint optimization of the transmit power at the D2D 
source and the RIS placement, under constraints on the transmit power at the D2D source and 
on the limited interference imposed on the cellular user for two RIS deployment topologies. 
Due to the coupled optimization variables, the formulated optimization problem is extremely intractable. 
We propose an equivalent transformation which we are able to solve analytically. In the transformed problem, 
an expression for the average value of the signal-to-interference-noise ratio (SINR) at the D2D receiver is 
derived in closed-form. 
Our theoretical derivations are corroborated through simulation results, and various system design 
insights are deduced. It is indicatively showcased that the proposed RIS-empowered underlaid D2D system design 
outperforms the benchmark semi-adaptive optimal power and optimal distance schemes, offering $44\%$ and $20\%$ 
performance improvement, respectively.
\end{abstract}
\begin{IEEEkeywords}
Device-to-device (D2D),reconfigurable Intelligent Surface (RIS), cognitive radio, underlay, outage probability, 
optimization, power allocation.
\end{IEEEkeywords}

\section{Introduction}
\label{sec_intro} 
Sixth-generation (6G) networks are envisaged to cater to a large number of connected user terminals, and hence, there is a 
huge demand for radio frequency spectrum as well as high data rate traffic \cite{Akyildiz_6G_2020}. 
The ever-increasing demand for spectrum can be met through device-to-device (D2D) communications. This paradigm enables direct wireless 
connections of battery-limited devices \cite{D2D_survey}. Specifically, a D2D user pair communicates through reusing the 
uplink spectrum dedicated to cellular users, and thus, confronts the spectrum scarcity issue. Additionally, as the 
D2D users are physically adjacent to each other, they can convey messages without the involvement of a base station (BS). 
Thus, the communication overhead induced by the latter implementation is significantly reduced \cite{Zhang_Jo_d2d_EE_swipt}. 
Overlay and underlay are the two methods of spectrum sharing. In the underlay method, the BS and the D2D transmitter 
operate simultaneously on the same frequency range, maintaining the induced interference power at the cellular receivers 
below a specified threshold. However, the interfering signals coming from a BS and the inherent randomness of a wireless channel limit the 
quality of the signal received at the D2D user. 
In the overlay method, the D2D pair interacts through a dedicated channel, hence, both the D2D user pair and the 
cellular users are free from ensuing interference. 

The design and optimization of underlaid cognitive radio systems, both standalone and in conjunction with emerging beyond 
fifth-generation (5G) technologies, has been the subject of extensive research and development endeavors \cite{D2D_survey_2022}. Indicatively, \cite{George_CR_powalloc_harvst_tx, George_green_CR} studied underlaid cognitive radios with energy harvesting-enabled secondary transmissions, while \cite{George_spectrum_sensing} investigated the issue of data detection and spectrum sensing performed by the secondary receiver. 
Performance analyses of relay selection under cognitive radio setting were explored in \cite{George_CR_AF_relsel, George_BER_underlay_relsel, George_OP_CR_relsel}. 
Recently, there has been a lot of interest in reconfigurable intelligent surfaces (RISs) 
as a comprehensive technology for 6G communication networks \cite{George_RIS_survey_ch_model}. 
Particularly, RISs are metasurfaces composed of a collection of 
ultra-low power, programmable, and possibly sub-wavelength spaced elements of tunable 
electromagnetic responses that have the ability to alter the phase of impinging radio waves dynamically
\cite{George_RIS_EE, Renzo_George_smart_meta_surface,Larsson_RIS_chnl_mdl, WavePropTCCN, George_RIS_6G_connect}. Thus, they can intelligently configure the signal propagation environment to enhance 
the signal towards intended users, and/ or weaken the signal by creating interference through precise beam steering \cite{George_RIS_survey_expt}. 
As a result, the RIS technology is viewed~\cite{RIS_standards} as a cost-efficient and energy-saving solution to address the coverage problem, facilitate the 
high capacity requirements of end-user, and enable various localization and sensing applications, as well as integrated sensing and communications \cite{George_RIS_DF_hybrid,George_RIS_deployment,Keykhosravi2022infeasible,RIS_ISAC_2023}. 
In \cite{Han_orientation_locn_opt}, the authors studied the optimal RIS placement that maximizes the coverage area of an 
RIS-aided downlink system. The placement of multiple RISs for maximizing the signal-to-noise (SNR) for a transmitter-receiver 
communication pair was analyzed in~\cite{moustakas_alexandg_2023}, unveiling the regimes where the RIS need to be placed, 
namely, either to the midpoint between the nodes or close to any of them. The latter analysis complemented the relevant 
literature~\cite{Larsson_RIS_chnl_mdl, Ntontin_RIS_placement, amplifying_RIS_2022}. Following the recent trend of computationally autonomous 
RISs for sensing and self-configurability~\cite{hardware2020icassp,HRIS_Mag}, a target sensing framework at the RIS side 
was developed in~\cite{Zhang_targetsense_IRS}. In that work, the RIS reflection coefficients and the angle of reflection 
from the target that maximize the received signal power at the sensors were obtained. 

\subsection{State-of-the Art}
\label{sec_rvw} 
For their ability to create smart radio environments via dynamic reflective beamforming and over-the-air analog computing \cite{George_RIS_survey_expt, Space_shift_keying_RIS, PhysFad}, 
RISs have shown a substantial potential to solve various design issues coupled with traditional cognitive radios.  
In several communication scenarios, RISs have been considered an effective means to mitigate interference in 
cognitive radio networks \cite{Guan_ruizhang_IRS_CRN, Yang_outg_cog_RIS}, and cellular Internet-of-Things (IoT) 
\cite{Yu_Zhang_IRS_cellular_iot}. It has been actually proved that RISs can enhance the quality-of-service through phase 
alignment, assuming complete knowledge of the channel state, as well as localization information of the 
user \cite{Kudathanthirige_Amarasuriya_IRS_rayleigh, George_RIS_overhead}. Under such phase control settings, 
few recent works have carried out outage probability (OP) analyses of D2D frameworks. In particular, D2D networks in the 
underlay mode without a direct path between the D2D pairs were investigated in 
\cite{Ni_Zhu_IRS_D2D_nakagami_perfana, Yang_outg_cog_RIS, khoshafa_telex_IRS_PLS_D2D, Kaleem_IRS_FD_jam_D2D}. 
In the presence of a cellular user, closed-form expressions for the OP at the D2D user were presented in
\cite{Ni_Zhu_IRS_D2D_nakagami_perfana, khoshafa_telex_IRS_PLS_D2D} under different settings. 
Using Nakagami-$m$ channels, the performance of an RIS-aided framework was investigated in 
\cite{George_RIS_perfana_nakagami} (note that $m=1$ corresponds to the Rayleigh distribution). 
In \cite{khoshafa_telex_IRS_PLS_D2D}, a similar underlaid D2D framework with a single eavesdropper was considered, and the secrecy OP of the system was studied. 
A similar system including a full-duplex receiver with jamming capability was studied in \cite{Kaleem_IRS_FD_jam_D2D}. 
In \cite{George_RIS_secrecy_AN, MaliciousRIS_2022}, the maximization of the secrecy rate was studied for different cases of 
channel state information (CSI) availability, where both the legitimate user and eavesdropper were assumed to be assisted by respective RISs.

Very recently, considerable attention has been paid to the optimization of RIS-aided spectrum-sharing communication 
networks involving cognitive radios, IoT, and D2D. To solve design problems related to RIS-assisted systems, several authors 
have attempted to optimize some of the key performance metrics. 
In an RIS-assisted cognitive radio network setup, aiming to jointly design the transmit precoder (TP) at the 
secondary receiver (SR) and the phase profile at the RIS, 
the authors maximized the rate of the SR \cite{Larsson_cog_IRS_opt, Schober_IRS_MIMO_cog}. The work in 
\cite{Cunhua_IRS_MIMO_cog} emphasised on the optimization of a weighted sum of the achievable rates. 
Those optimization problem formulations considered constraints on the transmit power at the secondary transmitter (ST) and the
interference power of the primary receivers (PR). 
An SNR constraint on SRs and a unit-modulus constraint on RIS phase response 
were considered in \cite{Cunhua_IRS_cog_impCSI}, where the goal was the minimization of the ST transmit power. 
The authors in \cite{Qingqing_IRS_cog_SE} investigated the spectral efficiency (SE) maximization problem and 
obtained the optimal TP and the angle of elevation of the ST along with the phase profile of the RIS.
In \cite{Gong_hanzo_BF_SWIPT_IoT}, the authors introduced an RIS-assisted energy harvesting IoT framework 
and derived the optimized TP and discrete phase shift of the RIS that maximize the least signal-to-interference-noise 
ratio (SINR). In another work\cite{HMWang_IRS_cog_SR_max}, the secrecy rate maximization was investigated.

The interference temperature and the computational capabilities of devices are two of the main shortcomings of 
D2D communication systems. The adoption of an RIS can actually contribute in overcoming the first issue through robust 
interference mitigation design, which can be offered via the RIS reflective beamforming. 
This reflective beamforming can be rather directive in low angular spread channels. 
The authors in \cite{Chen_Poor_IRS_d2d} obtained the  joint optimal phase profile at the RIS and the power allocated to D2D sources 
that maximize the sum-rate performance. 
The joint maximization of the energy efficiency (EE) and SE was investigated in 
\cite{Yang_IRS_cellular}, where the optimal values of the transmit power, RIS phase profile, spectrum reuse factor, 
and the TP at the BS were derived. The authors in \cite{jose_anirudh_outage_min} optimized the power allocated to each device cluster 
with the goal of minimizing the OP of non-orthogonal multiple access underlaid D2D networks using meta-heuristics approaches. In addition, several researchers have considered RISs in D2D networks to optimize various other performance metrics,  
e.g., the D2D communication rate \cite{Xu_sumrate_opt_IRS_D2D, Xie_weightsumrate_opt_IRS_D2D}, SINR at D2D user \cite{ICASSP23_RIS_SINR}\footnote{The conference version of this work considered a similar underlaid RIS-empowered D2D communication system, but formulated and solved a different design problem having the SINR at the D2D receiver as the optimization objective.}, 
BS transmit power \cite{Sugiura_IRS_secure_SEE_max}, EE \cite{Rose_IRS_EE}, and secure EE \cite{Zeng_IRS_cog_SEE_max}. 
It is also noted that RISs are lately appearing in several other concepts with the focus being their phase profile optimization. Indicatively, an RIS-aided system with multiple unmanned aerial vehicles (UAVs) was studied in \cite{Duong2022_RIS_UAV} aiming to enhance the signal quality at the receiver. An efficient technique based on deep reinforcement learning~\cite{pervasive_DRL_RIS} was devised to jointly optimize the transmit power at the UAVs and the RIS phase shifts. In~\cite{Poor2020_improper}, it was concluded that improper Gaussian signaling offers increased data rates, as compared to a typical proper Gaussian signaling method. In \cite{Q_Wu_JSAC_2020}, simultaneous wireless information and power transfer (SWIPT) with multiple RISs was explored, where it was found that SWIPT reduces the total transmit power budget, thus, helping RISs to achieve better energy beamforming gain.

\subsection{Motivation and Contributions}
To the best of the authors' knowledge, the open technical literature on RIS-assisted D2D communication systems does not include any 
OP analysis for the general case where there exist direct links between the D2D user pairs. It is noted that 
D2D communications usually take place over short distances, hence, it is highly probable that direct links are 
present~\cite{Yang_IRS_cellular, Yang_outg_cog_RIS}. In such practical cases, the extra RIS-induced reflected path will 
contribute to a diversity gain. In addition, the available literature lacks designs for the optimal OP-minimizing resource allocation between the D2D source and the RIS placement, under cognitive radio constraints. In this paper, we address this research gap, by designing a spectrally efficient underlaid D2D system with ultra-low-power IoT devices which incorporates an inexpensive and energy-efficient reflective RIS.
The key contributions of this paper are summarized as follows. 
\begin{itemize}
\item We consider an RIS-assisted cognitive communication system that includes the direct link between the D2D 
communication pair. We adopt a phase-shift control mechanism at the RIS reflecting elements such that all the 
reflected channels are combined coherently with the direct communication channel. 
\item We present a novel analysis for the system's underlying OP, which is based on the central limit theorem. Additionally, we provide a statistical characterization of the SINR metric at the D2D user 
via an integral-form expression that includes the Gaussian $Q$-function.
To obtain analytical insights on the latter metric, we approximate the $Q$-function using an exponential series, 
which is invoked to define two different SNR regimes. 
\item We formulate a novel optimization framework that focuses on the OP minimization of the proposed RIS-empowered underlaid D2D communication system. For this design objective, 
we derive the jointly optimal RIS placement and power allocated to the D2D transmitter. The proposed joint optimization takes into account 
both a parallel and an elliptical topologies for the RIS deployment. 
Since the objective function involves the OP performance which is expressed in an integral form, we present an equivalent tractable transformation, whose globally optimal solution is derived.
\item A thorough numerical investigation is presented to validate the performance analysis, the 
quality of the considered approximations, and the global optimality of the proposed joint design. 
This investigation also offers valuable engineering insights and quantifies the achievable performance gains.
It is showcased that the devised system configuration strategy outperforms the benchmark optimal power allocation method by $44\%$. 
\end{itemize}
\subsection{Organization and Notation}\label{sec_org}
The remainder of this paper is structured as follows. 
Section~\ref{sec_sys_model} introduces the system and channel models. 
In Section~\ref{sec_perf_ana}, the SINR statistics and an investigation of the OP performance with and 
without a direct link are presented. The distribution of the SNR of the BS to the D2D receiver channel and 
approximations for the average SINR at the D2D receiver, along with the proposed framework for optimization, are included in Section~\ref{sec_joint_opt}. 
This section also outlines the proposed equivalent problem formulation to the considered design objective for two different  network topologies. In Section~\ref{sec_opt_sol}, we present the optimal solutions for RIS deployment and power allocation for the D2D source. 
Numerical results and insights are discussed in Section~\ref{sec_results}, while  Section~\ref{sec_conclude} contains the paper's concluding remarks.

Italicized letters and bold-face lower-case letters denote scalars and matrices, respectively.
$\mathbb{C}^{M\times N}$ stands for the $M\times N$ complex matrix. Also, 
$\mathbf{x}^H$ and $\norm{\textbf{x}}$ stand for the conjugate-transpose and $l_2$ norm of $\textbf{x}$, 
respectively. $\textrm{diag}(\textbf{x})$ creates a diagonal matrix with the elements of $\textbf{x}$ 
placed along the principal diagonal. 
$\abs{x}$ and $\angle{x}$ are used to denote the magnitude and phase of $x$, respectively. 
The circularly symmetric complex Gaussian (CSCG) is represented by the notation $\mathcal{CN}(\mu,\sigma^2)$, 
where $\mu$ and $\sigma^2$ denote the mean and variance, respectively.
The expectation and variance operations are denoted by $\mathbb{E}[\cdot]$ and $\mathbb{V}[\cdot]$, respectively, whereas $\mathbb{P}[\cdot]$ returns the probability. Finally, $\overline{F_{A}}(\cdot)=1-F_{A}(\cdot)$ 
denotes the complement of the cumulative distribution function (CCDF) of the random variable $A$, and $A\circ B(\cdot)$ 
stands for the composite function of functions $A(\cdot)$ and $B(\cdot)$. Table \ref{tab_sym} includes various symbols used 
along with their meaning.

\section{System and Channel Models}
\label{sec_sys_model}
\begin{table}
\footnotesize
\caption{Table of symbols}
\begin{tabular}{|p{1.1cm}| p{7cm} | } 
\hline 
{Symbols} & {Description}\\
\hline
$P_{\rm b}$ & Transmit power budget of BS \\ \hline
$P_{\rm s}$ & Transmit power budget of DS \\ \hline
$y_{\rm j}$ & Received signal at node j \\ \hline
$n_{\rm j}$ & AWGN at node j \\ \hline
$N_0$ & Variance of AWGN \\ \hline 
$M$ & Number of antennas at the BS \\ \hline 
$N$ & Number of reflecting elements at the RIS \\ \hline 
$\alpha_n$ & Amplitude of $n$-th element at RIS \\ \hline
$\theta_n$ & Phase shift of $n$-th element at RIS \\ \hline
$h_{\rm ij}$ & Channel coefficient of i- j link \\ \hline 
$\beta_{\rm ij}$ & Mean of channel gain of i- j link \\ \hline 
$s_{\rm b}$ & Transmit signal at BS \\ \hline
$s_{\rm s}$ & Transmit signal at DS \\ \hline 
$\bar{\gamma}_{\rm j}$ & Average transmit SNR at node j \\ \hline
$\Gamma_{\rm d}$ & SINR at DU \\ \hline 
$I_{\rm b}$ & Interference at CU due to D2D transmission \\ \hline 
$I^{\rm th}$ & Interference threshold \\ \hline 
$\gamma_{\rm th}$ & SINR threshold for D2D transmission \\ \hline 
$p_{\rm out}$ & Outage probability of the D2D link \\ \hline
\end{tabular} \label{tab_sym}
\end{table}
\subsection{Network Topology and Access Protocol}
\label{subsec_topology}
Let us consider an RIS-aided underlaid D2D communication system, as illustrated in Fig. \ref{FIG_sys_model}. It consists of a 
single cellular BS, a single RIS, and a single D2D communication pair (one D2D transmitter (DS) and one D2D receiver (DU)), as well as a cellular user (CU). 
In particular, the D2D pair is allowed to utilize the same spectrum with the cellular network, hence, its communication creates interference when realized simultaneously with cellular transmissions. 
Unlike \cite{khoshafa_telex_IRS_PLS_D2D, Ni_Zhu_IRS_D2D_nakagami_perfana}, we assume that DU is directly reachable from DS. In addition, this communication is assumed to be boosted by an RIS~\cite{George_RIS_survey_expt}. 
Clearly, these RIS-boosted D2D transmissions interfere cellular communications. Likewise, DU experiences interference signals from the cellular transmissions. 

We make the assumption that the RIS to CU interference is negligible; this can be accomplished through carefully optimized RIS placement and area of influence~\cite{EURASIP_RIS} aiming to only impact the small geographic area of the intended D2D communications. 
In addition, the deployed reflective RIS is devoid of any active radio-frequency chains, hence, it merely performs phase adjustments of its reflecting elements, which consumes a negligible amount of power~\cite{George_RIS_survey_ch_model}.
We therefore take into account the signal that is reflected by the RIS only once, and neglect those signals which undergo twice or multiple times reflections~\cite{RIS_tacit_2023}. 
\begin{figure}[!t]
\centering
\includegraphics[width=\columnwidth] {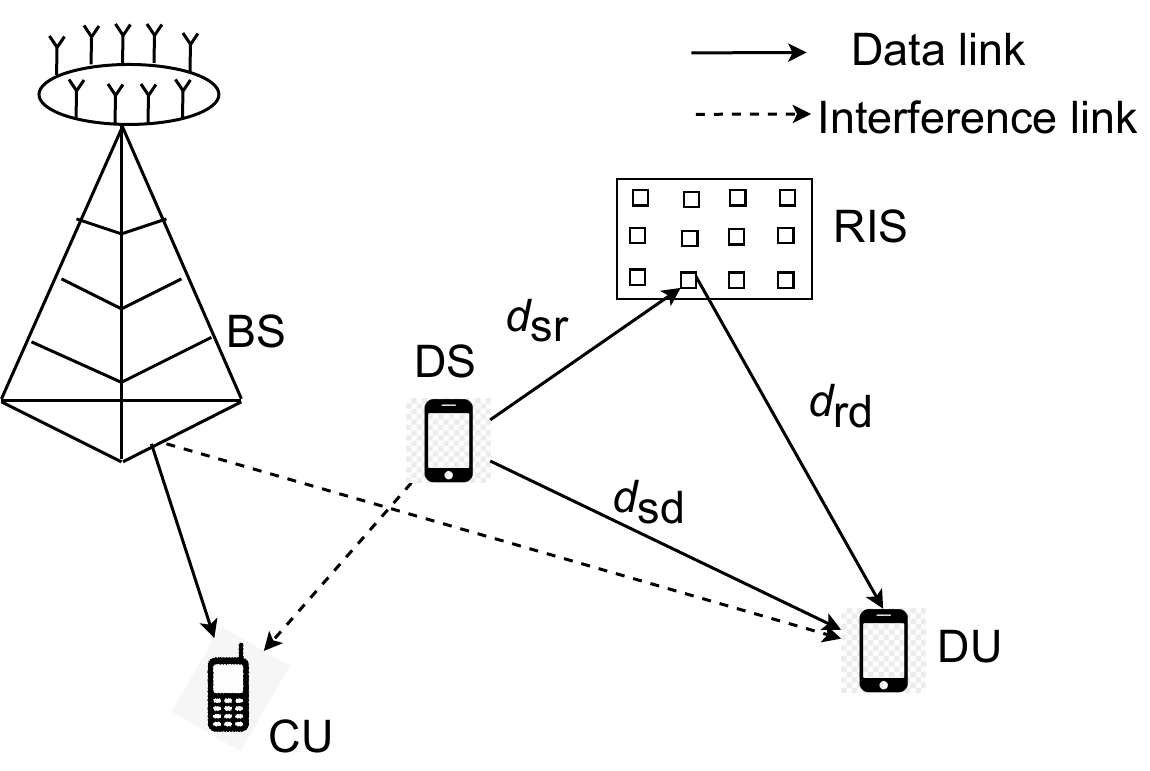}
\caption{The RIS-empowered underlay D2D communication system model under consideration.}
\label{FIG_sys_model}
\end{figure} 
Finally, the RIS is optimally placed according to the following two topologies (Fig.~\ref{FIG_topology}): 
\subsubsection{Parallel} Assuming DS is placed at point ($x_0$, $y_0$) and DU is at point ($x_0+d_{\rm sd}$, $y_0$), the RIS is 
placed at $(x_0+d,y_0+y)$ which is on a path parallel to the DS-DU line segment. 
Hence, it holds $d_{\rm sr}=\sqrt{d^2+y^2}$ and $d_{\rm rd}=\sqrt{(d_{\rm sd}-d)^2+y^2}$. 

\begin{figure}[!t]
\centering
\includegraphics[width=\columnwidth] {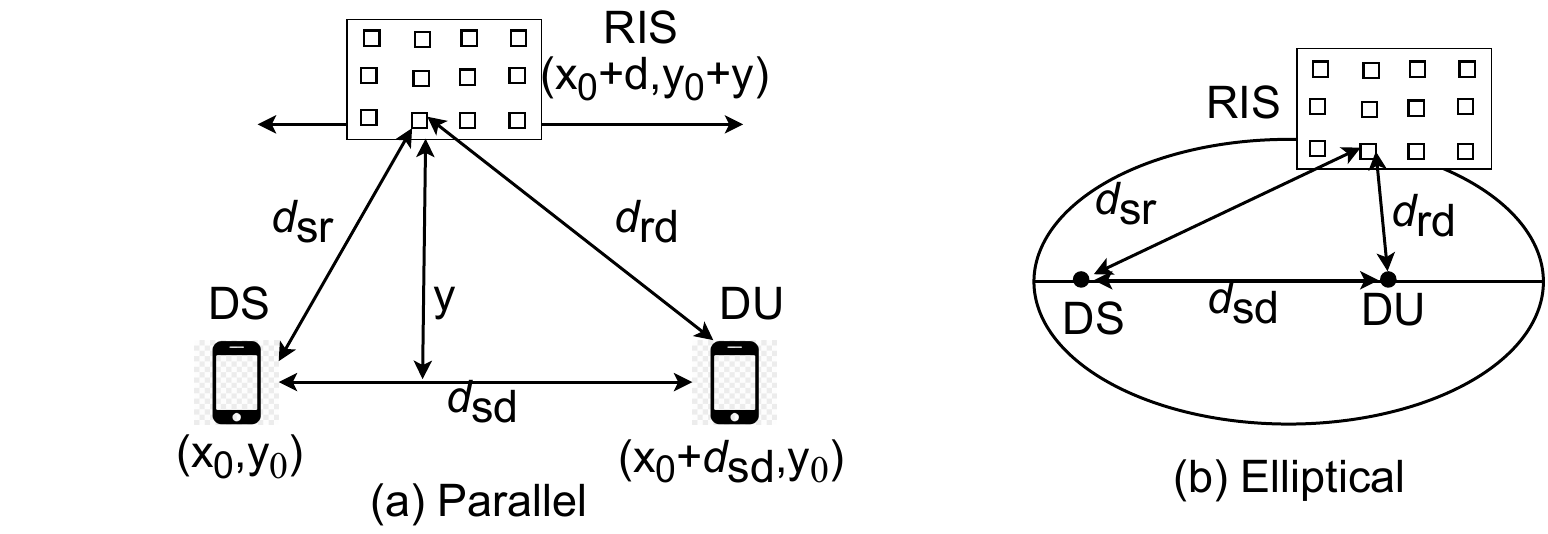}
\caption{The topology of the simulated RIS-aided underlaid D2D communication system.}
\label{FIG_topology}
\end{figure} 
\subsubsection{Elliptical} The RIS is placed along the locus of an elliptical path, where the DS and DU are located 
on the two foci of the ellipse with a separation equal to $d_{\rm sd}$. Therefore, it yields that $d_{\rm sr}=d$ and $d_{\rm rd}=d_{\rm sd}/\epsilon - d$, where $\epsilon$ represents the eccentricity of the ellipse \cite{DM_ICC_2016}. 

\subsection{RIS and Channel Models}
\label{subsec_RIS_chnl}
We assume that DS, DU, and CU are single-antenna nodes, while the BS is equipped with an $M$-antenna array and the RIS comprises $N$ reflecting elements. The fading channel coefficients between DS and RIS, RIS and DU, DS and DU, DS and CU, BS and CU, and between 
BS and DU are denoted by $\mathbf{h}_{\rm sr} \in \mathbb{C}^{N\times 1}$, $\mathbf{h}_{\rm rd} \in \mathbb{C}^{N\times 1}$, $h_{\rm sd}\in \mathbb{C}^{1}$, 
$h_{\rm sc}\in \mathbb{C}^{1}$, $\mathbf{h}_{\rm bc} \in \mathbb{C}^{M\times 1}$, $\mathbf{h}_{\rm bd}\in \mathbb{C}^{M\times 1}$, respectively. 
Specifically, $\mathbf{h}_{\rm sr}\triangleq \lb[h_{\text{sr},1},\ldots,h_{\text{sr},n},\ldots,h_{\text{sr},N}\rb]^T$ and 
$\mathbf{h}_{\rm rd}\triangleq\lb[h_{\text{rd},1},\ldots,h_{\text{rd},n},\ldots,h_{\text{rd},N}\rb]^T$. 
We assume that the receivers and the RIS possess perfect knowledge of the instantaneous CSI of the channels they are 
involved at every fading duration. We also make the reasonable assumption that the antennas at D2D users and RIS elements are placed adequately apart, hence, the respective channel gains between different pairs of antennas undergo independent fading. 
In particular, the RIS is assumed to be placed in the Fraunhofer region of DS and DU 
\cite{Zhang_IRS_farfield,Dai_IRS_farfield}, hence, the distances between DS and each $n^{\textrm{th}}$ ($n=1,2,\ldots,N$) RIS element are the same; the same holds for the distances between each $n^{\textrm{th}}$ element of the RIS and DU.

The channel coefficients are considered mutually independent and follow the Rayleigh distribution. 
Particularly, the channel between each i-j link, with $\rm ij \in \{\rm sr, rd, sd, sc, bc, bd\}$, is given 
by \cite{Kudathanthirige_Amarasuriya_IRS_rayleigh}:
\begin{align}
\label{eq_chnl_path_loss}
h_{\rm ij}&\triangleq\sqrt{\beta_{\rm ij}}\tilde{h}_{\rm ij},
\end{align}
where $\beta_{\rm ij}$ represents the path loss \cite{Ellingson_pathloss} of the i-j link and 
$\tilde{h}_{\rm ij}\sim \mathcal{CN}(0,1)$.
The channel gains, denoted as $\abs{h_{\rm ij}}^2$, are exponentially distributed random variables having the mean: 
\begin{align}
\label{eq_PL}
\beta_{\rm ij}&=C\mathbb{E}\lb[\abs{h_{\rm ij}}^2\rb]=\lb(\frac{d_0}{d_{\rm ij}}\rb)^{\eta}, 
\end{align}
where $d_0$ and $\eta$ stand for the reference distance and the path loss exponent, respectively.  
$C$ is the path loss at the reference point $d_0$.
For the considered parallel topology, the parameters $\beta_{\rm sr}$ and $\beta_{\rm rd}$ are obtained as: 
\begin{align}
\label{eq_beta_parallel}
\beta_{\rm sr}=\lb(\frac{d_0}{\sqrt{d^2+y^2}}\rb)^\eta,~ 
\beta_{\rm rd}=\lb(\frac{d_0}{\sqrt{(d_{\rm sd}-d)^2+y^2}}\rb)^\eta,
\end{align} 
while, for the elliptical topology, as follows: 
\begin{align}
\label{eq_beta_ellipse}
\beta_{\rm sr}=\lb(\frac{d_0}{d}\rb)^\eta, ~\beta_{\rm rd}=\lb(\frac{d_0}{d_{\rm sd}/\epsilon-d}\rb)^\eta. 
\end{align}
The transmitted signal from BS to CU is designed as $\mathbf{x}_{\rm b} = \mathbf{w}_{\rm b}s_{\rm b} \in \mathbb{C}^{M\times 1}$, 
where $\mathbf{w}_{\rm b} \in \mathbb{C}^{M\times 1}$ is the beamforming vector. 
Simultaneously, DS transmits the signal $s_{\rm s}$ to DU. Without loss of generality, we assume that $s_{\rm b}$, $s_{\rm s}\sim\mathcal{CN}(0,1)$. 
Hence, the power budgets at DS and BS are represented by $P_{\rm s}$ and $P_{\rm b}=\norm{\mathbf{w}_{\rm b}}^2$, 
respectively. 
Furthermore, we respectively express the links between each $n^{\textrm{th}}$ RIS element and DS as well as DU as 
\cite{Kudathanthirige_Amarasuriya_IRS_rayleigh}
\begin{align}
h_{\text{sr},n}=\abs{h_{\text{sr},n} } e^{j\angle h_{\text{sr},n}} , 
~~h_{\text{rd},n}=\lvert h_{\text{rd},n} \rvert e^{j\angle h_{\text{rd},n}},
\end{align}	
where both $\angle h_{\text{sr},n}$ and $\angle h_{\text{rd},n}$ follow the uniform distribution within the 
range $(-\pi, \pi]$. 
The response of the RIS is captured in the phase-shift matrix, which is defined as
\begin{align}
\label{eq_diag_phase}
\mathbf{\Theta}\triangleq \diag\lb(\alpha_1 e^{j\theta_1}, \cdots, \alpha_n e^{j\theta_n},\cdots,
\alpha_N e^{j\theta_N}\rb),
\end{align}
where $\alpha_n\in \lb[0,1\rb]$ and $\theta_n\in \lb[0,2\pi\rb)$ stand for the amplitude and the phase shift of each $n^{\textrm{th}}$ element of the RIS, respectively.

\subsection{Received Signal and SINR Modeling}
\label{subsec_sig_analysis}
The signals received at DU and CU can be respectively expressed as follows:
\begin{align}
\label{eq_rx_DU}
y_{\rm s}&\triangleq\sqrt{P_{\rm s}}\lb(h_{\rm sd}+\mathbf{h}^H_{\rm rd}\mathbf{\Theta} \mathbf{h}_{\rm sr}\rb)s_{\rm s} + 
\mathbf{w}^H_{\rm b}\mathbf{h}_{\rm bd}s_{\rm b} +n_{\rm s},  \\
\label{eq_rx_CU}
y_{\rm c}&\triangleq\mathbf{w}^H_{\rm b}\mathbf{h}_{\rm bc}s_{\rm b}+\sqrt{P_{\rm s}}h_{\rm sc} s_{\rm s}+n_{\rm c},
\end{align}
with $n_{\rm s}$ and $n_{\rm c}$ representing the additive white Gaussain noise terms at DU and CU, respectively. 
These terms are distributed as $n_{\rm i}\sim \mathcal{CN}(0, N_0)$ with $\rm i\in \{d, c\}$. 

Following the assumption that the knowledge of $\abs{h_{\text{sr},n}}$, $\abs{h_{\text{rd},n}}$ and $\mathbf{\Theta}$ is available at DU, this node treats 
the rest of the received signals as interference. From expression \eqref{eq_rx_DU}, the SINR at DU can be expressed as: 
\begin{align}
\label{eq_snr_DU_beta_theta}
\Gamma_{\rm d}&\triangleq\frac{P_{\rm s}\abs{h_{\rm sd}+\sum\limits^N_{n=1}\alpha_n\abs{h_{\text{sr},n}} \abs{h_{\text{rd},n}} 
e^{j\lb(\theta_n+\angle h_{\text{sr},n}+\angle h_{\text{rd},n}\rb)}}^2}
{\abs{\mathbf{w}^{H}_{\rm b}\mathbf{h}_{\rm bd}}^2+N_0}.
\end{align}To obtain the maximum SINR at DU, 
the phases of the $N$ paths of the cascaded DS-RIS-DU channel can be intelligently aligned with the phase of the direct channel \cite{Larsson_RIS_chnl_mdl}. 
This phase configuration is given for each $n^{\textrm{th}}$ RIS element by \cite{Kudathanthirige_Amarasuriya_IRS_rayleigh,khoshafa_telex_IRS_PLS_D2D} 
\begin{align}
\label{eq_phase_align}
\theta^{\star}_n&=\angle h_{\rm sd}-\lb(\angle h_{\text{sr},n}+\angle h_{\text{rd},n}\rb). 
\end{align}

Employing digital beamforming at multiantenna BS would require several radio frequency (RF) chains equal to the number 
of antennas involved. As these RF chains comprise of several active components, namely, amplifier, converter, mixer, 
filter and others, implementing beamforming at BS will increase the hardware complexity and cost \cite{Mehta_TAS}. 
Therefore, at the BS, the transmit antenna offering the maximum instantaneous SNR is chosen as the best transmit 
antenna \cite{khoshafa_telex_IRS_PLS_D2D, Zeng_IRS_cog_SEE_max}. This transmit antenna selection (scheme) requires a 
single RF chain at the multiantenna BS. Hence, the effective SNR of the BS-DU channel is obtained as:

\begin{align}
\label{eq_BS_ant_sel_bd}
\gamma_{\rm bd}\triangleq\max_{m \in \{1,2,\ldots, M\}}\gamma_{\text{bd},m}, 
\end{align}
where $\gamma_{\text{bd}, m}=\bar{\gamma}_{\rm b}\abs{h_{\text{bd},m}}^2$ and $\bar{\gamma}_{\rm b}=P_{\rm b}/N_0$.
By substituting \eqref{eq_phase_align} into \eqref{eq_snr_DU_beta_theta}, the SINR at DU can be expressed as follows:
\begin{align}
\label{eq_snr_DU_fin}
\Gamma_{\rm d}&=\frac{\bar{\gamma}_{\rm s}\abs{Y}^2}{\gamma_{\rm bd}+1}=\frac{\gamma_{\rm srd}}{\gamma_{\rm v}},
\end{align}
where 
$\gamma_{\rm srd}\triangleq \bar{\gamma}_{\rm s}\abs{Y}^2$,  
$Y\triangleq h_{\rm sd}+X$, $X\triangleq\sum^N_{n=1} \alpha_n\abs{h_{\text{sr},n}} \abs{h_{\text{rd},n}}$, 
$\bar{\gamma}_{\rm s}\triangleq P_{\rm s}/N_0$, $\gamma_{\rm v}\triangleq \gamma_{\rm bd}+1$, and 
$\gamma_{\rm bd}\triangleq\bar{\gamma}_{\rm b}\max_m\abs{h_{\text{bd},m}}^2$. 

Due to the simultaneous transmissions of the DS over the same frequency band as BS, the induced interference temperature $I_{\rm b}$ is required to be maintained below an interference threshold $I^{\textrm{th}}$ that can tolerated by the CU. In mathematical terms, it should hold for the D2D-induced interference:  
\begin{align}
\label{eq_interference_powr_lim}
I_{\rm b}\triangleq P_{\rm s}\mathbb{E}\lb[\lvert h_{\rm sc}\rvert^2\rb] \le I^{\textrm{th}}.
\end{align}

\section{Outage Probability Analysis} 
\label{sec_perf_ana}
In this section, we analyze the OP performance of the proposed RIS-aided cognitive communication system. We first obtain the distribution of the SINRs in subsection~ \ref{sec_dist_sinr}. Subsection~\ref{subsec_OP}
 includes the OP analysis when the direct link between the D2D communication pair is present, while subsection~\ref{sec_without_dir_ana} deals with the case where this link is absent. 

\subsection{Distribution of the SINRs} 
\label{sec_dist_sinr}
Starting from the distribution of the DS-to-RIS and RIS-to-DU channels, we obtain the distribution of the sum of $N$ such 
reflected paths; this is represented by the random variable $X$. Next, we obtain the distribution of the sum of the DS-to-DU direct channel and DS-RIS-DU reflected channel; this is represented by $Y\triangleq h_{\rm sd}+X$. Finally, the distribution of the squared SNR, i.e., $Y^2$, is obtained. Similarly, we derive the distribution of the BS-to-DU channel. 

Since the random variables $\abs{h_{\text{sr},n}}$ and $\abs{h_{\text{rd},n}}$ follow the Rayleigh distribution, 
$X$ is defined as the sum of the product of two Rayleigh random variables. 
When the number $N$ of the RIS reflecting elements becomes large enough, $X$ converges to a Gaussian distribution. 
This results in a tractable statistical characterization. 
Assuming that the channel realizations are independent and identically distributed, 
i.e., $\alpha_n=\alpha$ $\forall n$, it becomes $X\sim\mathcal{CN}(\mu,\sigma^2)$ with the mean and variance given respectively by\cite{Tsiftsis_twoway_perfana_opt}:
\begin{equation}
 \begin{aligned}
\label{eq_mean_var_gaussian}
 \mu&=\frac{N\alpha\pi}{4}\sqrt{\beta_{\rm sr}\beta_{\rm rd}},
\\
 \sigma^2&= N\alpha^2\beta_{\rm sr}\beta_{\rm rd}\lb(1-\frac{\pi^2}{16}\rb).
\end{aligned}
\end{equation}
Therefore, from \cite{Kudathanthirige_Amarasuriya_IRS_rayleigh}, 
we can derive the cumulative distribution function (CDF) of $X$ as follows:
\begin{align}
\label{eq_pdf_cdf_gaussian}
 F_X(x)&=1-\omega Q\lb(\frac{x-\mu}{\sigma}\rb),
\end{align}
where
$Q(\cdot)$ denotes the $Q$-function defined in \cite{book_papoulis_pillai} with $\omega\triangleq1/Q(-\mu/\sigma)$. 

The distribution of $Y$ can be derived as ($a_t\triangleq\frac{1}{\beta_{\rm sd}}+\frac{1}{2\sigma^2}$):
\begin{align}
\label{eq_cdf_Y_conv}
&F_{Y}(y)=\mathbb{P}\lb[h_{\rm sd}+X \le y\rb] \nn \\
&=\int^y_0 F_{h_{\rm sd}}(y-x) f_X(x)dx \nn \\
&=1-\omega Q\lb(\frac{y-\mu}{\sigma}\rb) 
-\frac{\omega}{\sqrt{2a_t\sigma^2}}\exp\lb(-\frac{\lb(y-\mu\rb)^2}{2\sigma^2\beta_{\rm sd}a_t}\rb) \nn \\
&\times\lb[1-Q\lb(\sqrt{\frac{2}{a_t}}\lb(\frac{y}{\beta_{\rm sd}}+\frac{\mu}{2\sigma^2}\rb)\rb) 
-Q\lb(\sqrt{\frac{2}{a_t}}\lb(\frac{y-\mu}{2\sigma^2}\rb)\rb)\rb].
\end{align}
This result can be used for the derivation of the CDF of $\gamma_{\rm srd}=\bar{\gamma}_{\rm s}\abs{Y}^2$, using a standard transformation of random variables. In particular, it holds for this CDF that $F_{\gamma_{\rm srd}}(x)=F_{Y}(\sqrt{x/\bar{\gamma}_{\rm s}})$, which, using \eqref{eq_cdf_Y_conv}, yields the expression: 
\begin{align}
\label{eq_cdf_conv_square}
&F_{\gamma_{\rm srd}}(x)
=1-\omega Q\lb(\frac{\sqrt{x/\bar{\gamma}_{\rm s}}-\mu}{\sigma}\rb) 
-\frac{\omega}{\sqrt{2a_t\sigma^2}} \nn \\
&\times\!\exp\lb(\!-\frac{\lb(\sqrt{x/\bar{\gamma}_{\rm s}}-\mu\rb)^2}{2\sigma^2\beta_{\rm sd}a_t}\!\rb)\!\!
\lb[1\!-\!Q\lb(\sqrt{\frac{2}{a_t}}\lb(\frac{\sqrt{x/\bar{\gamma}_{\rm s}}}{\beta_{\rm sd}}+\frac{\mu}{2\sigma^2}\rb)\!\rb) \rb.\nn \\
&\lb. -Q\lb(\sqrt{\frac{2}{a_t}}\lb(\frac{\sqrt{x/\bar{\gamma}_{\rm s}}-\mu}{2\sigma^2}\rb)\rb)\rb].
\end{align}
\begin{remark}[Absence of the Direct Link]
When the distance between DS and DU is large, i.e., when $\beta_{\rm sd}\rightarrow 0$, it holds for the third term of 
\eqref{eq_cdf_conv_square} that 
$\exp\lb(-\frac{\lb(\sqrt{x/\bar{\gamma}_{\rm s}}-\mu\rb)^2}{2\sigma^2\beta_{\rm sd}a_t}\rb)\rightarrow0$, and that 
$Q\lb(\sqrt{\frac{2}{a_t}}\lb(\frac{\sqrt{x/\bar{\gamma}_{\rm s}}}{\beta_{\rm sd}}+\frac{\mu}{2\sigma^2}\rb)\rb)$, 
$Q\lb(\sqrt{\frac{2}{a_t}}\lb(\frac{\sqrt{x/\bar{\gamma}_{\rm s}}-\mu}{2\sigma^2}\rb)\rb)\rightarrow 1/2$. 
Hence, $F_{\gamma_{\rm srd}}(x)\rightarrow 1-\omega Q\lb(\frac{\sqrt{x/\bar{\gamma}_{\rm s}}-\mu}{\sigma}\rb)$. For this case, 
\eqref{eq_cdf_conv_square} reduces to the distribution of $\Gamma_{\rm d}$ when there is no direct link between DS and DU.  
\end{remark}

We next intend to use $F_{\gamma_{\rm srd}}(\cdot)$ to analytically obtain the OP performance, $p_{\rm {out}}$. However, the $Q$-function in \eqref{eq_cdf_conv_square} does not have a closed-form. To this end, we focus on the high SNR regime and deploy a sufficiently tight approximation for this function. By using \cite{Q_as_sum_exp}, the $Q$-function can be approximated using a sum of exponential functions as follows: 
\begin{subequations}\label{eq_sum_Q_func}
\begin{align}
\label{eq_sum_Q_func_more}
Q(x)&\approx
\sum^4_{k=1} c_k e^{-p_k x^2}, ~~~ x \ge 0,\\
\label{eq_sum_Q_func_less}
&\approx1-\sum^4_{k=1} c_k e^{-p_k x^2},~~~ x < 0,
\end{align}
\end{subequations}
where $c=\{\frac{1}{16}, \frac{1}{8}, \frac{1}{8}, \frac{1}{8}\}$, $p=\{\frac{1}{2}, 1, \frac{10}{3}, \frac{10}{17}\}$. 
With this approximation, the PDF of $\gamma_{\rm v}$ in \eqref{eq_snr_DU_fin} can be expressed as: 
\begin{align}
\label{eq_pdf_BS_ant_sel_bd}
  f_{\gamma_{\rm v}}(x)&= \sum^M_{m=1}\frac{mc_m}{\alpha_{\rm bd}}e^{-\frac{m\lb(x-1\rb)}{\alpha_{\rm bd}}},
\end{align}
where $c_m=(-1)^{m+1}\binom{M}{m}$ and $\alpha_{\rm bd}\triangleq\bar{\gamma}_{\rm b}\beta_{\rm bd}$. 

\subsection{Outage Probability (OP)}
\label{subsec_OP}
The reliability of the D2D link is measured via its OP performance, which can be obtained using \eqref{eq_snr_DU_fin} and \eqref{eq_cdf_conv_square} as: 
\begin{align}\label{eq_out_d2d}
p_{\rm out}&\triangleq\mathbb{P}\lb[\Gamma_{\rm d} \le \gamma_{\rm {th}}\rb]=\int^{\infty}_0 F_{\gamma_{\rm srd}}\lb(\gamma_{\rm {th}}x\rb)f_{\gamma_{\rm v}}(x)dx \nn \\
&=\mathcal{A}
-\frac{\omega\sqrt{\beta_{\rm sd}}}{\sqrt{\beta_{\rm sd}+2\sigma^2}}
\mathcal{B},
\end{align}  
where $\gamma_{\rm {th}}\triangleq 2^{\mathcal{R}_{\rm th}}-1$ represents the SINR threshold 
with $\mathcal{R}_{\rm th}$ being the information rate requirement of the D2D link. In the latter expression, parameters $\mathcal{A}$ and $\mathcal{B}$ are given by: 
\begin{align}
\label{eq_A}
\mathcal{A}
\triangleq& 1-\omega\int^{\infty}_{0}Q\lb(\frac{\sqrt{\gamma_{\rm {th}} x}-\mu\sqrt{\bar{\gamma}_{\rm s}}}
{\sigma\sqrt{\bar{\gamma}_{\rm s}}}\rb)f_{\gamma_{\rm v}}(x)dx,\\
\label{eq_B_def}
\mathcal{B}
\triangleq& \int^{\infty}_{0}\exp\lb(-\frac{\lb(\sqrt{\gamma_{\rm {th}}x}-\mu\sqrt{\bar{\gamma}_{\rm s}}\rb)^2}
{2\beta_{\rm sd}a_t\bar{\gamma}_{\rm s}\sigma^2}\rb)  \nn \\
&\times\lb[1- Q\lb(\sqrt{\frac{2}{a_t\bar{\gamma}_{\rm s}}}
\lb(\frac{\sqrt{\gamma_{\rm {th}} x}}{\beta_{\rm sd}}
+\frac{\mu\sqrt{\bar{\gamma}_{\rm s}}}{2\sigma^2}\rb)\rb) 
\rb. \nn \\
&\lb.
-Q\lb(\sqrt{\frac{2}{a_t\bar{\gamma}_{\rm s}}}
\lb(\frac{\sqrt{\gamma_{\rm {th}} x}-\mu\sqrt{\bar{\gamma}_{\rm s}}}{2\sigma^2}\rb)\rb)\rb]f_{\gamma_{\rm v}}(x)dx. 
\end{align} 
\begin{figure*}[htp!]
\begin{small}
\begin{align}
\label{eq_A1_A2}
&\mathcal{A}=1\!-\!\omega\!\lb(\underbrace{\int^{\infty}_{\frac{\mu^2\bar{\gamma}_{\rm s}}{\gamma_{\rm {th}}}}
\sum^4_{k=1} \exp\lb[-\frac{p_k}{\sigma^2\bar{\gamma}_{\rm s}}
\lb(\sqrt{\gamma_{\rm {th}} x}-\mu\sqrt{\bar{\gamma}_{\rm s}}\rb)^2\rb]
f_{\gamma_{\rm v}}(x)dx}_{\triangleq\mathcal{A}_1} 
+\underbrace{\int^{\frac{\mu^2\bar{\gamma}_{\rm s}}{\gamma_{\rm {th}}}}_0 
\!\!\lb(1-\sum^4_{k=1} \exp\lb[-\frac{p_k}{\sigma^2\bar{\gamma}_{\rm s}}
\lb(\sqrt{\gamma_{\rm {th}} x}-\mu\sqrt{\bar{\gamma}_{\rm s}}\rb)^2\rb]\rb)
f_{\gamma_{\rm v}}(x)dx}_{\triangleq\mathcal{A}_2}\rb)
\end{align}
\end{small}
\hrule
\end{figure*}
Based on the sign of the argument of the $Q$-function in \eqref{eq_sum_Q_func}'s approximation, we split $\mathcal{A}$ into two integrals as \eqref{eq_A1_A2}. 
By substituting \eqref{eq_pdf_BS_ant_sel_bd} and using \cite[eq.(2.33.1)]{book_Gradshteyn_Ryzhik}, the integral $\mathcal{A}_1$ can be computed as in \eqref{eq_A1_fin} (top of next page), 
where $a_0=\frac{p_k}{\sigma^2}+\frac{m\bar{\gamma}_{\rm s}}{\alpha_{\rm bd}\gamma_{\rm {th}}}$, 
$b_0=-\frac{\mu p_k}{\sigma^2}$, and 
$c_0=\frac{\mu^2 p_k}{\sigma^2}-\frac{m}{\alpha_{\rm bd}}$. 
Upon rearranging the limits in \eqref{eq_A1_A2}, the integral $\mathcal{A}_2$ is rewritten as:
\begin{align}
\label{eq_A2}
\mathcal{A}_2=&\mathcal{A}_1 +\lb(1-e^{\frac{\gamma_{\rm {th}}-\mu^2\bar{\gamma}_{\rm s}}
{\alpha_{\rm bd}\gamma_{\rm {th}}}}
\rb)^M 
-\frac{\bar{\gamma}_{\rm s}}{\gamma_{\rm {th}}}\sum^{M}_{m=1}\sum^4_{k=1}  
\frac{c_m c_k}{a_0}   
 e^{-c_0}  \nn \\
&\times\lb[1-2b_0\sqrt{\frac{\pi}{a_0}}e^{\frac{b^2_0}{a_0}}
Q\lb(b_0\sqrt{\frac{2}{a_0}}\rb)\rb].
\end{align}

Similarly, $\mathcal{B}$ that appears in \eqref{eq_out_d2d} can be rewritten using \eqref{eq_B_def} as $\mathcal{B}=\mathcal{B}_1-\mathcal{B}_2$, where:
\begin{align}
\label{eq_B1_B2}
\mathcal{B}_1
=&\int^{\infty}_{0}
\exp\lb(-\frac{\lb(\sqrt{\gamma_{\rm {th}}x}-\mu\sqrt{\bar{\gamma}_{\rm s}}\rb)^2}
{2\beta_{\rm sd}a_t\bar{\gamma}_{\rm s}\sigma^2}\rb) \nn \\
&\times\lb[1-Q\lb(\sqrt{\frac{2}{a_t\bar{\gamma}_{\rm s}}}
\lb(\frac{\sqrt{\gamma_{\rm {th}} x}}{\beta_{\rm sd}}
+\frac{\mu\sqrt{\bar{\gamma}_{\rm s}}}{2\sigma^2}\rb)\rb)\rb]f_{\gamma_{\rm v}}(x)dx, \\
\mathcal{B}_2=&\int^{\infty}_{0}
\exp\lb(-\frac{\lb(\sqrt{\gamma_{\rm {th}}x}-\mu\sqrt{\bar{\gamma}_{\rm s}}\rb)^2}
{2\beta_{\rm sd}a_t\bar{\gamma}_{\rm s}\sigma^2}\rb)  \nn \\
&\times Q\lb(\sqrt{\frac{2}{a_t\bar{\gamma}_{\rm s}}}
\lb(\frac{\sqrt{\gamma_{\rm {th}} x}-\mu\sqrt{\bar{\gamma}_{\rm s}}}{2\sigma^2}\rb)\rb)
f_{\gamma_{\rm v}}(x)dx.
\end{align}
By using \eqref{eq_sum_Q_func_more}, $\mathcal{B}_1$ is derived as in \eqref{eq_B1_fin} (top of next page), where 
$a_{t1}=\frac{1}{2\sigma^2+\beta_{\rm sd}}+\frac{m\bar{\gamma}_{\rm s}}{\gamma_{\rm {th}}\alpha_{\rm bd}}$,
$b_{t1}=-\frac{\mu}{2\sigma^2+\beta_{\rm sd}}$,
$c_{t1}=\frac{\mu^2}{2\sigma^2+\beta_{\rm sd}}-\frac{m}{\alpha_{\rm bd}}$, 
$a_{t2}=a_{t1}+\frac{2p_k}{a_t\beta^2_{\rm sd}}$,
$b_{t2}=b_{t1}+\frac{4\mu p_k}{2\sigma^2+\beta_{\rm sd}}$, and
$c_{t2}=c_{t1}+\frac{\mu^2p_k}{2a_t\sigma^4}$. Following similar steps to the derivation of \eqref{eq_A1_A2}, $\mathcal{B}_2$ can be obtained as in \eqref{eq_B2_fin} (top of next page), where 
$a_{t3}=\frac{\beta_{\rm sd}p_k+\sigma^2}{\sigma^2\lb(2\sigma^2+\beta_{\rm sd}\rb)}
+\frac{m\bar{\gamma}_{\rm s}}{\gamma_{\rm {th}}\alpha_{\rm bd}}$ and
$b_{t3}=-\frac{\mu\lb(\beta_{\rm sd}p_k+\sigma^2\rb)}{\sigma^2\lb(2\sigma^2+\beta_{\rm sd}\rb)}$.

\begin{figure*}[htp!]
\begin{small}
\begin{align}
\label{eq_A1_fin}
&\mathcal{A}_1=\frac{\bar{\gamma}_{\rm s}}{\gamma_{\rm {th}}}
\sum^{M}_{m=1}\sum^4_{k=1} \frac{c_m c_k e^{-c_0}}{a_0} 
\lb[e^{-\lb(a_0\mu^2+2b_0\mu\rb)}-2b_0\sqrt{\frac{\pi}{a_0}}e^{\frac{b^2_0}{a_0}}
Q\lb(\lb(\mu a_0+b_0\rb)\sqrt{\frac{2}{a_0}}\rb)\rb] 
\end{align}
\hrule
\begin{align}
\label{eq_B1_fin}
\mathcal{B}_1&=\frac{\bar{\gamma}_{\rm s}}{\gamma_{\rm {th}}}
\sum^M_{m=1}c_m\lb[\frac{e^{-c_{t1}}}{a_{t1}}
\lb(1-2b_{t1}\sqrt{\frac{\pi}{a_{t1}}}e^{\frac{b^2_{t1}}{a_{t1}}}
Q\lb(b_{t1}\sqrt{\frac{2}{a_{t1}}}\rb)\rb)-\sum^4_{k=1}\frac{c_k e^{-c_{t2}}}{a_{t2}}
\lb(1-2b_{t2}\sqrt{\frac{\pi}{a_{t2}}}e^{\frac{b_{t2}^2}{a_{t2}}}
Q\lb(b_{t2}\sqrt{\frac{2}{a_{t2}}}\rb)\rb)\rb]
\end{align}
\hrule
\begin{align}
\label{eq_B2_fin}
\mathcal{B}_2=&\frac{\bar{\gamma}_{\rm s}}{\gamma_{\rm {th}}}\sum^M_{m=1}c_m\lb[
\frac{e^{-c_{t1}}}{a_{t1}}\lb(1-e^{-\lb(a_{t1}\mu^2+2b_{t1}\mu\rb)}
-2b_{t1}\sqrt{\frac{\pi}{a_{t1}}}e^{\frac{b^2_{t1}}{a_{t1}}}
Q\lb(b_{t1}\sqrt{\frac{2}{a_{t1}}}\rb)\rb)+\sum^4_{k=1}\frac{c_k e^{-c_{t2}}}{a_{t3}}
\lb(2 e^{-\lb(a_{t3}\mu^2+2b_{t3}\mu\rb)}-1\rb. \rb. \nn \\
&\lb. \lb.
-2b_{t3}\sqrt{\frac{\pi}{a_{t3}}}e^{\frac{b_{t3}^2}{a_{t3}}}
\lb(2Q\lb(\lb(\mu a_{t3}+b_{t3}\rb)\sqrt{\frac{2}{a_{t3}}}\rb)-Q\lb(b_{t3}\sqrt{\frac{2}{a_{t3}}}\rb)\rb)
\rb)
\rb]
\end{align}
\hrule
\end{small}
\end{figure*}
\subsection{The Case of the Absence of the Direct Link}
\label{sec_without_dir_ana}
When there is no direct link between DS and DU, the received signal expression \eqref{eq_rx_DU} deduces to:
\begin{align}
\label{eq_rx_DU_nodir}
y_{\rm s}&=\sqrt{P_{\rm s}}\mathbf{h}^H_{\rm rd}\mathbf{\Theta} \mathbf{h}_{\rm sr}s_{\rm s} + 
\mathbf{w}^H_{\rm b}\mathbf{h}_{\rm bd}s_{\rm b} +n_{\rm s}, 
\end{align}
Additionally, from \eqref{eq_snr_DU_fin}, $\gamma_{\rm srd}=\bar{\gamma}_{\rm s}\abs{X}^2$ with 
$\theta^{\star}_n=-\lb(\angle h_{\text{sr},n}+\angle h_{\text{rd},n}\rb)$ $\forall n$. Therefore, the CDF of $\gamma_{\rm srd}$ in \eqref{eq_cdf_conv_square} reduces to the following simple expression:
\begin{align}
 F_{\gamma_{\rm srd}}(x)=1-\omega Q\lb(\frac{\sqrt{x/\bar{\gamma}_{\rm s}}-\mu}{\sigma}\rb),
\end{align}
which can be substituted in \eqref{eq_out_d2d} to obtain OP via \eqref{eq_A1_A2}, i.e.:  
\begin{small}
\begin{align}
\label{eq_out_d2d_nodir}
p_{\rm out}&=\int^{\infty}_0 F_{\gamma_{\rm srd}}\lb(\gamma_{\rm {th}}x\rb)f_{\gamma_{\rm v}}(x)dx 
=\mathcal{A}.
\end{align} 
\end{small}
\begin{figure*}[btp!]
 \begin{small}
\begin{align}
\label{eq_d2d_snr_2ndorder1}
\widehat{\Gamma}_{\rm d}&=
\frac{2P_{\rm s}d_0^{\eta}}{P_{\rm b}d_{\rm sd}^{\eta}}
\lb(1+\frac{N\alpha\pi\sqrt{\pi d_{\rm sd}^{\eta}\beta_{\rm sr}\beta_{\rm rd}}}{4\sqrt{d_0^{\eta}}}
+\frac{N^2\alpha^2 d_{\rm sd}^{\eta}\beta_{\rm sr}\beta_{\rm rd}}{d_0^{\eta}}\rb)
\frac{\sum^M_{m=1}\frac{c_m e^{\frac{mN_0}{P_{\rm b}\beta_{\text{bd}}}}}
{\lb(m/\beta_{\text{bd}}\rb)^2}}
{\lb(\sum^M_{m=1}\frac{c_m e^{\frac{mN_0}{P_{\rm b}\beta_{\text{bd}}}}}{m/\beta_{\text{bd}}}\rb)^3}
\end{align}
 \end{small}
\hrule
\end{figure*}
\section{Proposed Optimization Framework}
\label{sec_joint_opt}
In this section, we present our optimization framework for the design of the proposed RIS-empowered underlaid D2D system. To deal with the derived complicated OP performance expression, we present efficient approximations and an equivalent transformation of the design problem formulation.

\subsection{Optimization Formulation}
We now formulate the proposed OP minimization problem including the following constraints: 
the interference control for CU, the individual power budgets at DS and BS, and the RIS deployment (RD) constraint under the parallel topology placement; in the next section, the elliptical topology will be also considered. 
It is noted that, to adhere to the underlay definition of the considered cognitive communications, 
we next use an interference temperature constraint. However, the presented framework is also valid for a CU rate 
constraint. 
The mathematical definition of our OP minimization problem is : 
\begin{equation}
\label{eq_outg_min_P0}
\begin{aligned}
(\mathbf{P}_{\mathrm{0}}):\min_{P_{\rm s}, d} ~~ & p_{\rm out}\lb(P_{\rm s}, d\rb) \quad\text{subject to (s.t.)} \nn \\
 \textstyle({\rm C1}):& I_{\rm b}\lb(P_{\rm s}\rb) - I_{\textrm{th}}\le 0, \nn \\
 \textstyle({\rm C2}):& P_{\rm s}\ge 0,~~~\qquad\textstyle({\rm C3}): P_{\rm s} \le P^{\textrm{max}}_{\rm s},\nn \\  
 \textstyle({\rm C4}):& d\ge \sqrt{\delta^2-y^2}, 
 \textstyle({\rm C5}): d\le d_{\rm sd}- \sqrt{\delta^2-y^2},
\end{aligned}
\end{equation}
where $\delta\triangleq 2fL^2/c$ indicates the smallest distance that needs to exist between the DS and RIS 
(or RIS and DU) \cite{DM_ICC_2016} with
$f$ denoting the frequency of the transmitted signal, while $L$ and $c$ 
represent the radius of antenna and the speed of light, respectively. It can be easily recognized that it is difficult to directly solve $\mathbf{P}_{\mathrm{0}}$, since it involves the coupled optimization 
variables $P_{\rm s}$ and $d$, and a non-convex objective function. Also, the $p_{\rm out}$ expression includes the 
$Q$-function and multiple summation terms, which further complicates the solution. 
In the following, we present a simplified objective function that results in an optimization problem adhering to a globally optimal solution, which will help us draw useful insights. 

\subsection{Proposed Approximations}
We present two approximations that will be used to derive simpler expressions for $p_{\rm out}$ in $\mathbf{P}_{\mathrm{0}}$ and the moments of 
\subsubsection{Expectation of $\Gamma_{\rm d}$}
\label{subsubsec_approx_Taylor}
Since the OP expression in \eqref{eq_out_d2d} is of an integral form, we perform a transformation analysis on $\Gamma_{\rm d}$. In particular, we deploy Jensen's inequality to derive the following approximation for $\Gamma_{\rm d}$. 
\begin{lemma}
\label{approx_gamma_d}
The function $\Gamma_{\mathrm{d}}(\cdot)$ in \eqref{eq_snr_DU_fin} is unimodal, hence, using Jensen's inequality, we can write 
$\mathbb{E}[\Gamma_{\mathrm{d}}(\cdot)]\ge \Gamma_{\mathrm{d}}(\mathbb{E}[\cdot])$. 
An approximation of the expected value of $\Gamma_{\rm d}$ is given in expression \eqref{eq_d2d_snr_2ndorder1} (top of this page). 
\end{lemma}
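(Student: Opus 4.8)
The plan is to exploit the product structure $\Gamma_{\rm d}=\bar\gamma_{\rm s}\abs{Y}^2/\gamma_{\rm v}$ in \eqref{eq_snr_DU_fin}, where the numerator collects the direct-plus-cascaded channel $Y=h_{\rm sd}+X$ and the denominator $\gamma_{\rm v}=\gamma_{\rm bd}+1$ gathers the BS-to-DU interference-plus-noise term. First I would settle the convexity fact underlying the Jensen step: viewing $\Gamma_{\rm d}$ as $g(\abs{Y},\gamma_{\rm v})\triangleq\bar\gamma_{\rm s}\abs{Y}^2/\gamma_{\rm v}$, the map $(u,v)\mapsto u^2/v$ on $v>0$ is the standard quadratic-over-linear function, whose Hessian is positive semidefinite; hence $g$ is jointly convex, providing the convexity that underlies the claimed Jensen inequality (the statement's term \emph{unimodal} describing this well-behaved shape). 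Jensen's inequality then gives $\mathbb{E}[\Gamma_{\rm d}]\ge g(\mathbb{E}[\abs{Y}],\mathbb{E}[\gamma_{\rm v}])=\Gamma_{\rm d}(\mathbb{E}[\cdot])$, establishing the stated direction of the bound and motivating replacement of the random quantities by their moments.

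Next I would factor the expectation exactly. Since $Y$ depends only on the DS--DU and DS--RIS--DU channels while $\gamma_{\rm v}$ depends only on the BS--DU channel, the two are independent, so $\mathbb{E}[\Gamma_{\rm d}]=\bar\gamma_{\rm s}\,\mathbb{E}[\abs{Y}^2]\,\mathbb{E}[1/\gamma_{\rm v}]$ with no approximation in the split, and the two factors are then treated separately. For the numerator I would invoke the CLT model of Section~\ref{sec_dist_sinr}: with $X$ of mean $\mu$ and variance $\sigma^2$ from \eqref{eq_mean_var_gaussian} and $h_{\rm sd}$ Rayleigh (so $\mathbb{E}[\abs{h_{\rm sd}}]=\tfrac{\sqrt\pi}{2}\sqrt{\beta_{\rm sd}}$ and $\mathbb{E}[\abs{h_{\rm sd}}^2]=\beta_{\rm sd}$), the phase-aligned sum gives $\mathbb{E}[\abs{Y}^2]=\mathbb{E}[\abs{h_{\rm sd}}^2]+2\,\mathbb{E}[\abs{h_{\rm sd}}]\mathbb{E}[X]+\mathbb{E}[X^2]$. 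Substituting $\beta_{\rm sd}=(d_0/d_{\rm sd})^\eta$ together with the moments of $X$ yields exactly the parenthetical factor of \eqref{eq_d2d_snr_2ndorder1}: the constant $1$ from $\mathbb{E}[\abs{h_{\rm sd}}^2]$, the $\sqrt\pi\,\mu/\sqrt{\beta_{\rm sd}}$ cross term, and the $N^2$ term from the second moment of $X$.

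For the denominator factor $\mathbb{E}[1/\gamma_{\rm v}]$ I would apply a second-order (delta-method) expansion of $1/\gamma_{\rm v}$ about $\mathbb{E}[\gamma_{\rm v}]$, namely $\mathbb{E}[1/\gamma_{\rm v}]\approx 1/\mathbb{E}[\gamma_{\rm v}]+\mathbb{V}[\gamma_{\rm v}]/(\mathbb{E}[\gamma_{\rm v}])^3=\mathbb{E}[\gamma_{\rm v}^2]/(\mathbb{E}[\gamma_{\rm v}])^3$; the nonnegative correction keeps this consistent with the Jensen lower bound $1/\mathbb{E}[\gamma_{\rm v}]$, which is why the cube appears in \eqref{eq_d2d_snr_2ndorder1}. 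The required moments $\mathbb{E}[\gamma_{\rm v}]$ and $\mathbb{E}[\gamma_{\rm v}^2]$ are then evaluated in closed form from the order-statistics PDF \eqref{eq_pdf_BS_ant_sel_bd}, whose binomial weights $c_m=(-1)^{m+1}\binom{M}{m}$ for the maximum of $M$ exponentials generate the sums $\sum_m c_m e^{m N_0/(P_{\rm b}\beta_{\rm bd})}/(m/\beta_{\rm bd})^k$. Finally I would collect the constants ($\bar\gamma_{\rm s}=P_{\rm s}/N_0$ and $\bar\gamma_{\rm b}=P_{\rm b}/N_0$) and multiply the two factors to recover \eqref{eq_d2d_snr_2ndorder1}.

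I expect the main obstacle to lie in the denominator rather than the numerator. Two points need care: (i) the accuracy and sign-consistency of truncating the delta-method expansion at second order, so that the resulting estimate remains a faithful, lower-bound-type approximation of $\mathbb{E}[1/\gamma_{\rm v}]$ across the operating SNR range; and (ii) the unit shift $\gamma_{\rm v}=\gamma_{\rm bd}+1$, which moves the lower integration limit to $1$ and is the source of the exponential factors $e^{m N_0/(P_{\rm b}\beta_{\rm bd})}$ in the moment sums. Evaluating these closed-form moments correctly, and confirming that their ratio collapses to the compact fraction in \eqref{eq_d2d_snr_2ndorder1}, is the delicate step; the numerator, by contrast, is a routine moment computation under the CLT and Rayleigh models.
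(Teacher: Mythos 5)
Your proposal is correct and follows essentially the same route as the paper: the paper applies a bivariate second-order Taylor expansion of $\gamma_{\rm srd}/\gamma_{\rm v}$ about the mean point, observes that independence kills the covariance term, and is left with $\mathbb{E}[\gamma_{\rm srd}]\,\mathbb{E}[\gamma_{\rm v}^2]/(\mathbb{E}[\gamma_{\rm v}])^3$ together with the same order-statistics moments of $\gamma_{\rm v}$ — which is exactly your exact factorization followed by the delta-method estimate of $\mathbb{E}[1/\gamma_{\rm v}]$. Your quadratic-over-linear convexity argument for the Jensen direction is an added (and more rigorous) justification than the paper's bare appeal to unimodality.
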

\begin{proof}
Using the expansion of the second-order Taylor series, the function $\Gamma_{\mathrm{d}}(\cdot)$ can be approximated around the point 
$(\mathbb{E}\lb[\gamma_{\rm srd}\rb],\mathbb{E}\lb[\gamma_{\rm v}\rb])$ as \cite{PDF_taylor_approx}:
\begin{align}
\label{eq_d2d_snr_2ndorder}
\widehat{\Gamma}_{\rm d}&=\mathbb{E}\lb[\frac{\gamma_{\rm srd}}{\gamma_{\rm v}}\rb] 
\nn \\&
\approx\frac{\mathbb{E}\lb[\gamma_{\rm srd}\rb]}{\mathbb{E}\lb[\gamma_{\rm v}\rb]}
-\frac{\cov(\gamma_{\rm srd},\gamma_{\rm v})}{(\mathbb{E}\lb[\gamma_{\rm v}\rb])^2}
+\frac{\mathbb{V}\lb[\gamma_{\rm v}\rb]\mathbb{E}\lb[\gamma_{\rm srd}\rb]}{(\mathbb{E}\lb[\gamma_{\rm v}\rb])^3}~.
\end{align}
Since $\gamma_{\rm srd}$ and $\gamma_{\rm v}$ are independent random variables, it holds $\cov(\gamma_{\rm srd},\gamma_{\rm v})=0$. 
Their expected values in \eqref{eq_snr_DU_fin} and the second moment of $\gamma_{\rm v}$ can be respectively computed as follows:
\begin{align}
\label{eq_mean_srd}
\mathbb{E}\lb[\gamma_{\rm srd}\rb]&
=\frac{P_{\rm s}\beta_{\rm sd}}{N_0}\lb(1+\frac{N\alpha\pi\sqrt{\pi\beta_{\rm sr}\beta_{\rm rd}}}{4\sqrt{\beta_{\rm sd}}}
+N^2\alpha^2\frac{\beta_{\rm sr}\beta_{\rm rd}}{\beta_{\rm sd}}\rb), \\
\label{eq_mean_v}
\mathbb{E}\lb[\gamma_{\rm v}\rb]&=
\frac{P_{\rm b}}{N_0}\sum^M_{m=1}\frac{c_m}{m/\beta_{\text{bd}}} 
e^{\frac{mN_0}{P_{\rm b}\beta_{\text{bd}}}}
, \\
\label{eq_second_moment_v}
\mathbb{E}\lb[\gamma^2_{\rm v}\rb]&=
2\lb(\frac{P_{\rm b}}{N_0}\rb)^2 \sum^M_{m=1}\frac{c_m
e^{\frac{mN_0}{P_{\rm b}\beta_{\text{bd}}}}}
{\lb(m/\beta_{\text{bd}}\rb)^2}. 
\end{align}
Furthermore, using \eqref{eq_mean_v} and \eqref{eq_second_moment_v}, $\mathbb{V}\lb[\gamma_{\rm v}\rb]$ can be derived as: 
\begin{align}
\label{eq_var_gamma_v}
&\mathbb{V}\lb[\gamma_{\rm v}\rb]=\mathbb{E}\lb[\gamma^2_{\rm v}\rb]-\lb(\mathbb{E}\lb[\gamma_{\rm v}\rb]\rb)^2 
\nn \\  &
=\lb(\frac{P_{\rm b}}{N_0}\rb)^2\lb(2\sum^M_{m=1}\frac{c_m e^{\frac{mN_0}{P_{\rm b}\beta_{\text{bd}}}}}
{\lb(m/\beta_{\text{bd}}\rb)^2}-\lb(\sum^M_{m=1}\frac{c_m e^{\frac{mN_0}{P_{\rm b}\beta_{\text{bd}}}}}
{m/\beta_{\text{bd}}}\rb)^2\rb).
\end{align}
simplified expression of $\widehat{\Gamma}_{\rm d}$ in \eqref{eq_d2d_snr_2ndorder1} is derived. 
\end{proof}
\subsubsection{Moments of $\gamma_{\rm bd}$}
\label{subsubsec_gumbel_approx_gamma_bd}
In order to obtain simplified versions of \eqref{eq_mean_v} and \eqref{eq_var_gamma_v}, we make use of the Gumbel distribution to approximate the moments of $\gamma_{\rm bd}$, and in turn $\gamma_{\rm v}$. Let 
$X^{\star}_m$ be the maximum among the $M$ samples $\{\gamma_{\text{bd},1},\gamma_{\text{bd},2},\ldots,\gamma_{\text{bd},M}\}$. Then, for any arbitrary sequence of $a_m$'s and $b_m$'s, the following exact probability is calculated:
\begin{align}
\mathbb{P}\lb[\frac{X^{\star}_m-a_m}{b_m} \le z\rb] =&\lb(\mathbb{P}\lb[X_1\le b_mz+a_m\rb]\rb)^M \nn\\
=&\lb[1-\exp\lb(-\frac{1}{\alpha_{\rm bd}}\lb(b_m z+a_m\rb)\rb)\rb]^M. 
\end{align}
By setting $a_m=\alpha_{\rm bd}\log M$ and $b_m=\alpha_{\rm bd}$ as $M\rightarrow\infty$, yields $\mathbb{P}\lb[\frac{X^{\star}_m-a_m}{b_m} \le z\rb]=\exp\lb(-e^{-z}\rb)$. 
Hence, the first and the second moment as well as the variance of $X^{\star}_m$ are respectively given as follows: 
\begin{align}
\label{eq_max_var_moments}
\mathbb{E}\lb[X^{\star}_m\rb]&=\alpha_{\rm bd}\lb(\log M+\zeta \rb), \\
\mathbb{E}\lb[(X^{\star}_m)^2\rb]&=\alpha^2_{\rm bd}\lb(\log M+\zeta \rb)^2+\alpha^2_{\rm bd}\pi^2/6,\nn \\
\mathbb{V}\lb[X^{\star}_m\rb]&=\mathbb{E}\lb[(X^{\star}_m)^2\rb]-\lb(\mathbb{E}\lb[X^{\star}_m\rb]\rb)^2
=\alpha^2_{\rm bd}\pi^2/6,
\end{align}
where $\zeta$ is the Euler- Mascheroni constant with $\zeta\approx0.5772$ \cite{Euler_number}.

Finally, it holds $\gamma_{\rm v}=1+\gamma_{\rm bd}$, thus, the mean and the variance of $\gamma_{\rm v}$ are given by:
\begin{align}
\label{eq_transformed_1stmoment}
\mathbb{E}\lb[\gamma_{\rm v}\rb]&=1+\mathbb{E}\lb[X^{\star}_m\rb]=1+\alpha_{\rm bd}\lb(\log M+\zeta \rb), \\
\label{eq_transformed_var}
\mathbb{V}\lb[\gamma_{\rm v}\rb]&=\mathbb{V}\lb[X^{\star}_m\rb]=\alpha^2_{\rm bd}\pi^2/6.
\end{align}

\subsection{Equivalent Transformation of the Problem $\mathbf{P}_{\mathrm{0}}$}
We next present a suboptimal solution for the power allocation $P_{\rm s}^{\star}$ at DS and the RIS
deployment $d^{\star}$
that maximizes $\widehat{\Gamma}_{\rm d}$, subject to the interference level constraint $\textstyle({\rm C1})$, the DS transmit power 
constraints $\textstyle({\rm C2})$ and $\textstyle({\rm C3})$, and the RD constraints $\textstyle({\rm C4})$ and $\textstyle({\rm C5})$:
\begin{equation}
\label{eq_sinr_max_P2}
\begin{aligned}
(\mathbf{P}_{\mathrm{1}}):\max_{P_{\rm s},d} \quad & 
{\widehat{\Gamma}_{\rm d}}\lb(P_{\rm s},d\rb)\,\, \textrm{s.t.~}\,\, \rm \textstyle({\rm C1})-\textstyle({\rm C5}). 
 \end{aligned}
\end{equation}
We first prove that the optimal solution for $\mathbf{P}_{\mathrm{0}}$ is feasible for $\mathbf{P}_{\mathrm{1}}$ using the following lemma. 
\begin{lemma}
\label{lemma_OP_min}
Minimizing the OP of the D2D communication pair is equivalent to maximizing the expected value of $\Gamma_{\rm d}$. 
\end{lemma}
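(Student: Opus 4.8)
The plan is to reduce the claimed equivalence to a first-order stochastic dominance (FSD) statement on the numerator $\gamma_{\rm srd}$ of the SINR, exploiting the fact that the interference-plus-noise term is independent of the two design variables. Starting from \eqref{eq_out_d2d}, the OP is $p_{\rm out}=F_{\Gamma_{\rm d}}(\gamma_{\rm {th}})=\int_0^\infty F_{\gamma_{\rm srd}}(\gamma_{\rm {th}}x)f_{\gamma_{\rm v}}(x)\,dx$, so the design enters only through the law of $\gamma_{\rm srd}$. Crucially, $\gamma_{\rm v}=1+\bar{\gamma}_{\rm b}\max_m\abs{h_{\text{bd},m}}^2$ from \eqref{eq_snr_DU_fin} depends solely on the BS-side quantities $(P_{\rm b},\beta_{\rm bd})$, which are not optimization variables, and it is independent of $\gamma_{\rm srd}$. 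Hence $\mathbb{E}[\Gamma_{\rm d}]=\mathbb{E}[\gamma_{\rm srd}]\,\mathbb{E}[1/\gamma_{\rm v}]$ with $\mathbb{E}[1/\gamma_{\rm v}]>0$ a design-independent constant, so that maximizing $\mathbb{E}[\Gamma_{\rm d}]$ is equivalent to maximizing $\mathbb{E}[\gamma_{\rm srd}]$, while minimizing $p_{\rm out}$ amounts to making $F_{\gamma_{\rm srd}}(\cdot)$ pointwise small. First I would therefore prove that, over the feasible set, a larger $\mathbb{E}[\gamma_{\rm srd}]$ is always accompanied by a pointwise-smaller $F_{\gamma_{\rm srd}}$.

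For the FSD step I would exploit the coherent combining structure. After the phase alignment \eqref{eq_phase_align}, the effective amplitude is $\abs{Y}=R+X$ with $R\triangleq\abs{h_{\rm sd}}$ Rayleigh (its law fixed by the constant $d_{\rm sd}$) and $X\ge0$ the reflected sum, so $\gamma_{\rm srd}=\bar{\gamma}_{\rm s}(R+X)^2$ with $\bar{\gamma}_{\rm s}=P_{\rm s}/N_0$. Under the Gaussian characterization \eqref{eq_mean_var_gaussian}, both moments of $X$ scale with $s\triangleq\sqrt{\beta_{\rm sr}\beta_{\rm rd}}$, namely $\mu=as$ and $\sigma=bs$ with $a=N\alpha\pi/4$ and $b=\alpha\sqrt{N(1-\pi^2/16)}$, so conditionally $X=s(a+bZ)$ for a standard normal $Z$ restricted to $\{a+bZ\ge0\}$. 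I would then couple two feasible designs on a common space sharing $(R,Z)$: if $P_{\rm s}^{A}\ge P_{\rm s}^{B}$ and $s^{A}\ge s^{B}$, then on $\{a+bZ\ge0\}$ one has $0\le R+s^{B}(a+bZ)\le R+s^{A}(a+bZ)$, whence $\gamma_{\rm srd}^{A}\ge\gamma_{\rm srd}^{B}$ pathwise and therefore $\gamma_{\rm srd}^{A}\succeq_{\rm st}\gamma_{\rm srd}^{B}$. FSD simultaneously yields $\mathbb{E}[\gamma_{\rm srd}^{A}]\ge\mathbb{E}[\gamma_{\rm srd}^{B}]$ and $F_{\gamma_{\rm srd}^{A}}(t)\le F_{\gamma_{\rm srd}^{B}}(t)$ for every $t$, i.e., the higher-mean design has the smaller OP.

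It then remains to observe that the feasible set decouples into a box. Constraints $({\rm C1})$--$({\rm C3})$ bound $P_{\rm s}$ alone (with $({\rm C1})$ reading $P_{\rm s}\beta_{\rm sc}\le I^{\textrm{th}}$), while $({\rm C4})$--$({\rm C5})$ bound $d$ alone, and $\beta_{\rm sd},\beta_{\rm sc},\beta_{\rm bd}$ are all independent of the placement $d$, which affects only $s(d)=\sqrt{\beta_{\rm sr}\beta_{\rm rd}}$ through \eqref{eq_beta_parallel}. Since both objectives are monotone in the merit pair $(P_{\rm s},s)$ via the FSD order, $\mathbb{E}[\Gamma_{\rm d}]$ nondecreasing and $p_{\rm out}$ nonincreasing, the OP-minimizer and the $\mathbb{E}[\Gamma_{\rm d}]$-maximizer both collapse to the identical separable sub-problems $P_{\rm s}^{\star}=\min(P^{\textrm{max}}_{\rm s},I^{\textrm{th}}/\beta_{\rm sc})$ and $d^{\star}=\arg\max_{d}\beta_{\rm sr}\beta_{\rm rd}$, which establishes the claimed equivalence.

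The hard part will be making the coupling rigorous under the truncated-Gaussian model: conditioning on $X\ge0$ (the factor $\omega$ in \eqref{eq_pdf_cdf_gaussian}) is what renders the pathwise domination exact, and I would need to verify that the discarded tail carries negligible mass, which it does since $a/b=(\pi/4)\sqrt{N/(1-\pi^2/16)}$ grows like $\sqrt{N}$ and hence $\mathbb{P}[a+bZ<0]=Q(a/b)\to0$ in the large-$N$ regime in which the Gaussian approximation is invoked. A secondary subtlety is that FSD only totally orders designs that are componentwise comparable in $(P_{\rm s},s)$; this is harmless here precisely because the separable box constraints place both optima at the componentwise-maximal feasible merit, so no incomparable pair can outperform it. I would also note that monotonicity in $s$ relies on $R,X\ge0$, guaranteed by the coherent alignment \eqref{eq_phase_align}, so that squaring preserves the order.
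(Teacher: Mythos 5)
Your proof is correct, and it takes a genuinely different route from the paper's. The paper argues at the level of the SINR distribution itself: it asserts that $\overline{F_{\Gamma_{\rm d}}}(\gamma_{\rm th})$ is an increasing function of $\mathbb{E}[\Gamma_{\rm d}]$, invokes unimodality and a cited result to pass from $\mathbb{E}[\Gamma_{\rm d}^{\star}]>\mathbb{E}[\Gamma_{\rm d}]$ to an ordering of outage probabilities, and concludes that the mean-maximizer also solves $\mathbf{P}_0$; the key monotonicity claim is stated rather than derived. You instead \emph{prove} that monotonicity by a monotone coupling: since $\gamma_{\rm v}$ is independent of the design variables, $\mathbb{E}[\Gamma_{\rm d}]=\mathbb{E}[\gamma_{\rm srd}]\,\mathbb{E}[1/\gamma_{\rm v}]$ reduces everything to $\gamma_{\rm srd}$, and the pathwise domination $R+s^{A}(a+bZ)\ge R+s^{B}(a+bZ)\ge 0$ delivers first-order stochastic dominance, hence simultaneously a larger mean and a pointwise-smaller CDF; the box structure of the feasible set then places both optima at the same componentwise-maximal point. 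This buys rigor the paper's argument lacks (a higher mean does not in general imply a pointwise-smaller CDF, even for unimodal laws, so the paper's closing claim needs exactly the dominance structure you supply), at the cost of being tied to the specific coherent-combining model rather than a generic distributional principle. Two small remarks: your worry about the truncated Gaussian tail is unnecessary, since $\mu/\sigma=a/b$ is design-independent, so the conditioning event $\{a+bZ\ge 0\}$ and its probability $1/\omega$ are identical for both designs and the coupling is exact on it; and the argument becomes even cleaner if you bypass the CLT approximation entirely by writing $X=\alpha\sqrt{\beta_{\rm sr}\beta_{\rm rd}}\sum_{n}\lvert\tilde{h}_{\text{sr},n}\rvert\lvert\tilde{h}_{\text{rd},n}\rvert=s\tilde{X}$ with $\tilde{X}\ge 0$ of design-independent law, which gives the same pathwise ordering with no truncation at all.
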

\begin{proof}
\label{proof_lemma_OP_min}
Expression \eqref{eq_out_d2d} can be rewritten as follows: 
\begin{align}
p_{\rm out}=1-\overline{F_{\Gamma_{\rm d}}}(\gamma_{\rm {th}})=1-\mathbb{P}\lb[\Gamma_{\rm d} > \gamma_{\rm {th}}\rb]. 
\end{align}
In addition, it will be shown that $\overline{F_{\Gamma_{\rm d}}}(\cdot)$ is an increasing function of its mean 
$\mathbb{E}\lb[\Gamma_{\rm d}\rb]$. 
Let us denote the jointly optimal solution maximizing $\Gamma_{\rm d}$ in   \eqref{eq_snr_DU_fin} as $\Gamma^{\star}_{\rm d}$. 
Then, it holds that: 
\begin{align}
\label{eq_joint_opt_compare}
\Gamma^{\star}_{\rm d}& > \Gamma_{\rm d} \nn \\
\Rightarrow\mathbb{E}\lb[\Gamma^{\star}_{\rm d}\rb]& > \mathbb{E}\lb[\Gamma_{\rm d}\rb],
\end{align}
with $\Gamma_{\rm d}\neq \Gamma^{\star}_{\rm d}$.  
By using \cite{Garg_outg_prob_min_transformpaper}, yields the inequality: 
\begin{align}
\mathbb{P}\lb[\log_2\lb(1+\mathbb{E}\lb[\Gamma^{\star}_{\rm d}\rb]\rb)\le \gamma_{\rm {th}}\rb]
& < \mathbb{P}\lb[\log_2\lb(1+\mathbb{E}\lb[\Gamma_{\rm d}\rb]\rb)\le \gamma_{\rm {th}}\rb],
\end{align}
which essentially means that $\Gamma^{\star}_{\rm d}$ also solves $\mathbf{P}_{\mathrm{0}}$. In conclusion, minimizing the CDF is equivalent to maximizing the expected value of the same random variable. This holds true for all 
unimodal distributions. 
\end{proof}

\section{Jointly Optimal RIS Placement and DS Power Allocation}
\label{sec_opt_sol}
The optimization problem $\mathbf{P}_{\mathrm{1}}$ includes the two considered design variables: the DS transmit power $P_{\rm s}$ and the RIS location parameter $d$. It will be next shown that the optimization over these variables can be carried out individually.
\begin{lemma}
 \label{lemma_decouple_vars}
The design problem $\mathbf{P}_{\mathrm{1}}$ can be decoupled into two sub-problems: one optimizing over $d$ and the other over $P_{\rm s}$. 
\end{lemma}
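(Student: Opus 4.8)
The plan is to exploit the multiplicative, separable structure of the objective $\widehat{\Gamma}_{\rm d}$ in \eqref{eq_d2d_snr_2ndorder1} together with the fact that the constraint set factorizes as a Cartesian product in the two design variables. First I would observe that, after substituting the topology-dependent path losses from \eqref{eq_beta_parallel} (for the parallel case) or \eqref{eq_beta_ellipse} (for the elliptical case), expression \eqref{eq_d2d_snr_2ndorder1} can be written as $\widehat{\Gamma}_{\rm d}\lb(P_{\rm s},d\rb)=K\,P_{\rm s}\,g(d)$, where $K$ collects every term that depends on neither $P_{\rm s}$ nor $d$. In particular, the entire ratio of summations over $m$ depends only on the fixed quantities $P_{\rm b}$, $N_0$, $\beta_{\rm bd}$, and $M$, while $g(d)$ denotes the parenthetical factor of \eqref{eq_d2d_snr_2ndorder1}, whose sole variable dependence enters through the product $\beta_{\rm sr}\beta_{\rm rd}$, i.e.\ through $d$. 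Thus the dependence on $P_{\rm s}$ is confined to the linear prefactor and the dependence on $d$ is confined to $g(d)$.

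Next I would verify the corresponding decoupling of the feasible region. Constraints $({\rm C1})$--$({\rm C3})$ involve $P_{\rm s}$ alone, since $I_{\rm b}\lb(P_{\rm s}\rb)=P_{\rm s}\mathbb{E}\lb[\abs{h_{\rm sc}}^2\rb]$ by \eqref{eq_interference_powr_lim}, whereas $({\rm C4})$--$({\rm C5})$ involve $d$ alone. Hence the feasible set is the Cartesian product $\mathcal{P}\times\mathcal{D}$ with $\mathcal{P}\triangleq\{P_{\rm s}:({\rm C1}),({\rm C2}),({\rm C3})\}$ and $\mathcal{D}\triangleq\{d:({\rm C4}),({\rm C5})\}$, so the two variables neither couple in the objective nor in the constraints.

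The key step is then a product-maximization argument. Because $K>0$, $P_{\rm s}\ge 0$ on $\mathcal{P}$ by $({\rm C2})$, and $g(d)\ge 1>0$ on $\mathcal{D}$ (being one plus nonnegative terms), both factors are nonnegative over the feasible set. For a product of nonnegative functions over a product domain, the inner maximization over $d$ at fixed $P_{\rm s}$ leaves the nonnegative prefactor $K P_{\rm s}$ untouched and selects the maximizer of $g(d)$, after which the outer maximization selects the largest feasible $P_{\rm s}$; formally $\max_{(P_{\rm s},d)\in\mathcal{P}\times\mathcal{D}}\widehat{\Gamma}_{\rm d}=K\lb(\max_{P_{\rm s}\in\mathcal{P}}P_{\rm s}\rb)\lb(\max_{d\in\mathcal{D}}g(d)\rb)$. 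This exhibits $\mathbf{P}_{\mathrm{1}}$ as two independent sub-problems, and since $g$ is monotonically increasing in $\beta_{\rm sr}\beta_{\rm rd}$, the $d$-sub-problem further reduces to maximizing $\beta_{\rm sr}\beta_{\rm rd}$ over $\mathcal{D}$.

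The main obstacle I anticipate is not the separation itself but confirming the sign of the constant $K$, i.e.\ that the ratio of alternating-binomial summations is strictly positive. I would settle this by recognizing the two sums as being proportional to the moments $\mathbb{E}\lb[\gamma_{\rm v}\rb]$ and $\mathbb{E}\lb[\gamma^2_{\rm v}\rb]$ already computed in \eqref{eq_mean_v} and \eqref{eq_second_moment_v}; being moments of the strictly positive random variable $\gamma_{\rm v}=1+\gamma_{\rm bd}$, they are positive, whence $K>0$ and the nonnegativity assumptions underpinning the product decomposition are secured.
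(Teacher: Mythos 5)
Your proof is correct, and it reaches the same conclusion as the paper by a more explicit route. The paper's own proof is terse: it only observes that $\widehat{\Gamma}_{\rm d}$ is strictly increasing in $P_{\rm s}$ for every $d$, so that the optimal power sits at the boundary fixed by $({\rm C1})$ and $({\rm C3})$ independently of $d$, and then asserts the decoupling. You instead make the separability fully explicit by factoring the objective as $K\,P_{\rm s}\,g(d)$ and noting that the feasible region is the Cartesian product $\mathcal{P}\times\mathcal{D}$, after which the product-of-maxima identity for nonnegative factors over a product domain yields the decoupling in both directions simultaneously. This buys two things the paper leaves implicit: first, it shows not only that the optimal $P_{\rm s}$ is independent of $d$ but also that the optimal $d$ is independent of $P_{\rm s}$ (the paper's monotonicity remark directly delivers only the former); second, it pins down the sign of the constant $K$ through the positivity of the moments of $\gamma_{\rm v}$ in \eqref{eq_mean_v} and \eqref{eq_second_moment_v}, without which the claimed monotonicity in $P_{\rm s}$ would not even be guaranteed. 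Your final observation that the $d$-sub-problem reduces to maximizing $\beta_{\rm sr}\beta_{\rm rd}$ is consistent with what the paper does next in Lemma~\ref{lemma_parallel}, where maximization of $\widehat{\Gamma}_{\rm d}$ over $d$ is shown to be equivalent to minimizing $Z_p$.
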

\begin{proof}
The objective function $\widehat{\Gamma}_{\rm d}$ of $\mathbf{P}_{\mathrm{1}}$ is strictly increasing with $P_{\rm s}$ for any $d$ value. After obtaining the optimal location, the optimal power allocation 
can be found by computing the maximum value of $P_{\rm s}$ that satisfies the constraint $\textstyle({\rm C1})$. We, thus, conclude that $\mathbf{P}_{\mathrm{1}}$ can be decoupled into two sub-problems. The first sub-problem involves optimizing the location for any feasible $P_{\rm s}$ value, which is followed by the second sub-problem that seeks finding the optimal $P_{\rm s}$ value for any feasible value of $d$. 
\end{proof}

\subsection{Optimal RIS Placement}\label{sec_opt_RIS}
As discussed in Lemma \ref{lemma_decouple_vars}, to derive the optimal solution for $\mathbf{P}_{\mathrm{1}}$, 
we first obtain the optimal RD, which is then followed by the optimal DS power allocation. The following lemma includes a key result that will help 
in simplifying the process of obtaining the optimal RD. 

\begin{lemma}
\label{lemma_parallel}
Maximizing $\widehat{\Gamma}_{\rm d}$ over the variable $d$, considering the RD constraints $\textstyle({\rm C4})$ and $\textstyle({\rm C5})$, 
is equivalent to minimizing the function $Z_p\triangleq\lb(y^2+d^2\rb)\lb(y^2+\lb(d_{\rm sd}-d\rb)^2\rb)$.
\end{lemma}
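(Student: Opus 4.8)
The plan is to isolate the dependence of $\widehat{\Gamma}_{\rm d}$ on the placement variable $d$ and then collapse the maximization to a monotone change of variables. First I would observe that, in the closed-form expression \eqref{eq_d2d_snr_2ndorder1}, the variable $d$ enters \emph{only} through the product $\beta_{\rm sr}\beta_{\rm rd}$: the leading prefactor $2P_{\rm s}d_0^{\eta}/(P_{\rm b}d_{\rm sd}^{\eta})$ and the entire ratio of summations over $m$ (which depends solely on $M$, $\beta_{\rm bd}$, $P_{\rm b}$, and $N_0$) are all independent of $d$. Hence, for a fixed $P_{\rm s}$, maximizing $\widehat{\Gamma}_{\rm d}$ over $d$ amounts to maximizing the bracketed term $g(t)\triangleq 1+\frac{N\alpha\pi\sqrt{\pi d_{\rm sd}^{\eta}}}{4\sqrt{d_0^{\eta}}}\sqrt{t}+\frac{N^2\alpha^2 d_{\rm sd}^{\eta}}{d_0^{\eta}}\,t$, evaluated at $t\triangleq\beta_{\rm sr}\beta_{\rm rd}\ge 0$.

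Next I would establish monotonicity in $t$. Since $N,\alpha,\pi,d_{\rm sd},d_0>0$, the function $g$ has the form $1+a\sqrt{t}+bt$ with $a,b>0$, which is strictly increasing on $t\ge 0$. Therefore $\widehat{\Gamma}_{\rm d}$ is a strictly increasing function of the product $\beta_{\rm sr}\beta_{\rm rd}$, and maximizing the SINR surrogate over $d$ is equivalent to maximizing $\beta_{\rm sr}\beta_{\rm rd}$ over $d$, subject to the same constraints $\textstyle({\rm C4})$--$\textstyle({\rm C5})$, which restrict $d$ alone and are thus unchanged by the reformulation.

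Then I would substitute the parallel-topology path losses from \eqref{eq_beta_parallel} to obtain $\beta_{\rm sr}\beta_{\rm rd}=d_0^{2\eta}\big[(d^2+y^2)\big((d_{\rm sd}-d)^2+y^2\big)\big]^{-\eta/2}=d_0^{2\eta}\,Z_p^{-\eta/2}$, with $Z_p=(y^2+d^2)(y^2+(d_{\rm sd}-d)^2)$. Because $\eta>0$, the map $Z_p\mapsto d_0^{2\eta}Z_p^{-\eta/2}$ is strictly decreasing on $Z_p>0$, so maximizing $\beta_{\rm sr}\beta_{\rm rd}$ is equivalent to minimizing $Z_p$. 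Chaining the two monotone reductions yields the claimed equivalence.

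There is no genuine analytical obstacle here; the result is a monotonicity/change-of-variable argument rather than a computation. The only point demanding care is the bookkeeping in the first step, namely verifying that every $d$-dependent quantity in \eqref{eq_d2d_snr_2ndorder1} is indeed subsumed by the product $\beta_{\rm sr}\beta_{\rm rd}$, and that the $m$-summations and the prefactor are genuinely constant in $d$, so that the objective truly collapses to a strictly decreasing function of $Z_p$ over the feasible interval set by $\textstyle({\rm C4})$--$\textstyle({\rm C5})$.
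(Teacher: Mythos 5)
Your proposal is correct and follows essentially the same route as the paper: substitute the parallel-topology path losses into \eqref{eq_d2d_snr_2ndorder1}, observe that $\widehat{\Gamma}_{\rm d}$ becomes a composite function of $Z_p$ alone, and conclude by monotonicity that maximization over $d$ reduces to minimization of $Z_p$. The only difference is that you verify the monotonicity explicitly (the bracketed term is $1+a\sqrt{t}+bt$ with $a,b>0$ in $t=\beta_{\rm sr}\beta_{\rm rd}=d_0^{2\eta}Z_p^{-\eta/2}$), whereas the paper asserts it and cites an external proposition; your bookkeeping is a welcome addition but not a different argument.
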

\begin{proof}
Using the definition of $Z_p$ and by substituting \eqref{eq_beta_parallel} in \eqref{eq_d2d_snr_2ndorder1}, 
$\widehat{\Gamma}_{\rm d}$ can be rewritten as follows:
\begin{align}
 \widehat{\Gamma}_{\rm d}=&
\frac{2P_{\rm s}d_0^\eta}{P_{\rm b}d_{\rm sd}^{\eta}}
\lb[1+\frac{N\alpha\pi d_0^{\eta/2}d_{\rm sd}^{\eta/2}\sqrt{\pi}}
{4Z_p^{\eta/4}} 
+\frac{N^2\alpha^2 d_0^{\eta}d_{\rm sd}^\eta}{Z_p^{\eta/2}}\rb]  \nn \\
 &\times\frac{\sum^M_{m=1}\frac{c_m e^{\frac{mN_0}{P_{\rm b}\beta_{\text{bd}}}}}
{\lb(m/\beta_{\text{bd}}\rb)^2}}
{\lb(\sum^M_{m=1}\frac{c_m e^{\frac{mN_0}{P_{\rm b}\beta_{\text{bd}}}}}{m/\beta_{\text{bd}}}\rb)^3}. 
\end{align}
We can, thus, express $\widehat{\Gamma}_{\rm d}$ as a function of $d$: 
\begin{align}
\widehat{\Gamma}_{\rm d}(Z_p, d)&=\widehat{\Gamma}_{\rm d}(Z_p(d))=\widehat{\Gamma}_{\rm d}\circ Z_p.
\end{align}
This transformation is monotonic, hence, according to \cite[Preposition~1]{DM_ICC_2016}, maximizing $\widehat{\Gamma}_{\rm d}$ over $d$ is 
equivalent to minimizing $Z_p$ over the same variable. 
\end{proof}
Following Lemma \ref{lemma_parallel}, an equivalent optimization problem for RD is obtained:
\begin{equation}
\label{eq_sinr_max_RD}
\begin{aligned}
(\mathbf{P}_{\mathrm{RD}}):\min_{d} \quad & 
{Z_p}\lb(d\rb) \,\,\textrm{s.t.~} \,\,\rm \textstyle({\rm C4}),\,\textstyle({\rm C5}),
 \end{aligned}
\end{equation}
with $\textstyle({\rm C4})$ and $\textstyle({\rm C5})$ being boundary constraints.  
Since the objective function is not convex in $d$, to compute the optimal solution, we first find the critical point for $Z_p$ with respect to $d$, by solving the following equation: 
\begin{align}
\label{eq_diff_zero_parallel}
\frac{\partial Z_p(d)}{\partial d}&=2\lb(2d-d_{\rm sd}\rb)\lb(d^2-d d_{\rm sd}+y^2\rb)=0. 
\end{align}
This equation has three solutions for $d$, namely, 
$d_1=d_{\rm sd}/2$ and $\{d_2,d_3\}=\lb(d_{\rm sd}\pm\sqrt{d_{\rm sd}^2-4y^2}\rb)/2$. Therefore, the set of candidate points for the optimal $d$ comprises the solutions of \eqref{eq_diff_zero_parallel} along with the boundary conditions $\textstyle({\rm C4})$ (namely, $d_4=\sqrt{\delta^2-y^2}$) 
and $\textstyle({\rm C5})$ (namely, $d_5=d_{\rm sd}-\sqrt{\delta^2-y^2}$). The second derivative of $Z_p$ over $d$ can be derived as follows:
\begin{align}
\label{eq_diff_order2_parallel}
\frac{\partial^2 Z_p(d)}{\partial d^2}&=2\lb(6d^2-6d d_{\rm sd}+2y^2+d^2_{\rm sd}\rb),
\end{align} 
hence, its values at the candidate $d$-points are given by:
\begin{equation}
\mathcal{I}=
\begin{cases}
 \label{second_derivative_parallel}
-\lb(d^2_{\rm sd}-4y^2\rb)<0, ~d=d_1 \\
2\lb(d^2_{\rm sd}-4y^2\rb)>0, ~d=\{d_2,d_3\} \\
2\lb(6\delta^2-6d_{\rm sd}\sqrt{\delta^2-y^2}+d^2_{\rm sd}-4y^2\rb)>0, \\
d=\{d_4,d_5\}
\end{cases}\,\,.
\end{equation}
As can be seen from the latter expression, $d_1$ needs to be excluded from the candidate set of \eqref{eq_diff_zero_parallel}'s solutions, since $I_1$ is negative. However, the remaining $d$-points can be substituted in the objective function of problem $\mathbf{P}_{\mathrm{RD}}$ to find the one solving it. The  values of $Z_p(\cdot)$ for these four candidate points can be calculated as follows: 
\begin{equation}
 \tilde{Z}_p(d)=
 \begin{cases}
 \label{Z_cond1}
\frac{1}{4}\lb(d^2_{\rm sd}\pm d_{\rm sd}\sqrt{d^2_{\rm sd}-4y^2}\rb)^2, ~d=\{d_2,d_3\} \\
\delta^2\lb(d^2_{\rm sd}+\delta^2-2d_{\rm sd}\sqrt{\delta^2-y^2}\rb), ~d=\{d_4,d_5\}
\end{cases}.
\end{equation}
Since it holds that $\delta^2\lb(d^2_{\rm sd}+\delta^2-2d_{\rm sd}\sqrt{\delta^2-y^2}\rb)
<\lb(d^2_{\rm sd}\pm d_{\rm sd}\sqrt{d^2_{\rm sd}-4y^2}\rb)^2$, the optimal value of $d$ is: 
\begin{align}
\label{eq_d_opt_parallel}
 d^{\ast}\triangleq \{d_4,d_5\}. 
\end{align}
 
We next solve the RD problem for the elliptical topology.  
\begin{corollary}
\label{ellipse_corollary}
The maximization of $\widehat{\Gamma}_{\rm d}$ in terms of $d$ is equivalent to minimizing $Ze\triangleq d^2\lb(d_{\rm sd}/\epsilon-d\rb)^2$ over the same variable.
\end{corollary}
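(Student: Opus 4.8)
The plan is to mirror the argument used for Lemma~\ref{lemma_parallel}, exploiting the fact that $\widehat{\Gamma}_{\rm d}$ in \eqref{eq_d2d_snr_2ndorder1} depends on the RIS location $d$ only through the product $\beta_{\rm sr}\beta_{\rm rd}$. First I would substitute the elliptical-topology path losses from \eqref{eq_beta_ellipse}, namely $\beta_{\rm sr}=\lb(d_0/d\rb)^\eta$ and $\beta_{\rm rd}=\lb(d_0/\lb(d_{\rm sd}/\epsilon-d\rb)\rb)^\eta$, into \eqref{eq_d2d_snr_2ndorder1}. Using the definition $Ze\triangleq d^2\lb(d_{\rm sd}/\epsilon-d\rb)^2$ together with $d_{\rm sr}=d$ and $d_{\rm rd}=d_{\rm sd}/\epsilon-d$, this product collapses to $\beta_{\rm sr}\beta_{\rm rd}=d_0^{2\eta}/Ze^{\eta/2}$. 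Consequently, the only $d$-dependent factor of $\widehat{\Gamma}_{\rm d}$ becomes the bracketed term $1+\frac{N\alpha\pi\sqrt{\pi}\,d_0^{\eta/2}d_{\rm sd}^{\eta/2}}{4\,Ze^{\eta/4}}+\frac{N^2\alpha^2 d_0^{\eta}d_{\rm sd}^{\eta}}{Ze^{\eta/2}}$, in exact analogy with the parallel case.

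Next I would observe that the multiplicative prefactor of $\widehat{\Gamma}_{\rm d}$, comprising $\frac{2P_{\rm s}d_0^\eta}{P_{\rm b}d_{\rm sd}^\eta}$ and the ratio of the $\gamma_{\rm v}$ moments, is independent of $d$ and strictly positive, whereas the bracketed factor is the sum of a constant and two terms that are strictly decreasing in $Ze$ (since $\eta>0$ and all coefficients are positive). Hence $\widehat{\Gamma}_{\rm d}$ is a strictly decreasing function of $Ze$, which lets me write it as the composite $\widehat{\Gamma}_{\rm d}(d)=\widehat{\Gamma}_{\rm d}\circ Ze(d)$. Because this outer dependence on $Ze$ is monotone, I would then invoke \cite[Proposition~1]{DM_ICC_2016}, exactly as in the proof of Lemma~\ref{lemma_parallel}, to transfer the optimization from $\widehat{\Gamma}_{\rm d}$ to $Ze$: maximizing a strictly decreasing function of $Ze(d)$ over $d$ is equivalent to minimizing $Ze(d)$ over $d$, which establishes the claimed equivalence.

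Since every step is a verbatim adaptation of Lemma~\ref{lemma_parallel} with $\lb(y^2+d^2\rb)\lb(y^2+\lb(d_{\rm sd}-d\rb)^2\rb)$ replaced by $Ze$, I do not anticipate a genuine obstacle, which is precisely why the statement is phrased as a corollary. The only point demanding care is the direction of monotonicity: the outer function is decreasing in $Ze$, so the maximization of $\widehat{\Gamma}_{\rm d}$ maps to a \emph{minimization} of $Ze$, consistent with how the result is stated.
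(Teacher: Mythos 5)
Your proposal is correct and follows essentially the same route as the paper, whose proof of this corollary is literally ``substitute \eqref{eq_beta_ellipse} into \eqref{eq_d2d_snr_2ndorder1} and follow the steps of Lemma~\ref{lemma_parallel}''; you have simply carried out those steps explicitly, including the correct identity $\beta_{\rm sr}\beta_{\rm rd}=d_0^{2\eta}/Z_e^{\eta/2}$ and the correct direction of monotonicity. No gaps.
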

\begin{proof}
Substituting the definitions of $\beta_{\rm sr}$ and $\beta_{\rm rd}$ in \eqref{eq_beta_ellipse} into \eqref{eq_d2d_snr_2ndorder1} 
and following similar steps to those included in Lemma~\ref{lemma_parallel}, yields the proof. 
\end{proof}
By using Corollary~\ref{ellipse_corollary}, we formulate the following optimization problem for RD: 
\begin{equation}
\label{eq_sinr_min_P3}
\begin{aligned}
(\textbf{P2}):\min_{d} \quad & 
{Z_e}\lb(d\rb) \\
\textrm{s.t.}\,\,\textstyle({\rm C6}):&\,\, d\ge \delta, ~~\textstyle({\rm C7}):& d \le d_{\rm sd}/\epsilon-\delta. 
 \end{aligned}
\end{equation}
where $\textstyle({\rm C6})$ and $\textstyle({\rm C7})$ are boundary constraints. 
To obtain the critical point of $Z_e(\cdot)$ with respect to $d$, the following equation needs to be solved 
\begin{align}
\label{eq_diff_zero_ellipse}
\frac{\partial Z_e(d)}{\partial d}&=2d\lb(d_{\rm sd}/\epsilon-d\rb)\lb(d_{\rm sd}/\epsilon-2d\rb)=0.
\end{align} 
The boundary conditions $\textstyle({\rm C6})$ and $\textstyle({\rm C7})$, along 
with the solutions of \eqref{eq_diff_zero_ellipse}, form the set of candidate points for the optimal $d$. 
From these conditions, we conclude that $d=0, d_{\rm sd}/\epsilon$ are not valid 
solutions. Next, we check the second derivative of $Z_e(\cdot)$ over $d$ to find the maximum point:
\begin{align}
\label{eq_diff_ellipse_2ndorder}
\frac{\partial^2 Z_e(d)}{\partial d^2}&=2\lb[\lb(d_{\rm sd}/\epsilon-2d\rb)^2-2d\lb(d_{\rm sd}/\epsilon-d\rb)\rb].
\end{align}
Evidently, the three candidate RD points are 
$d=d_{\rm sd}/(2\epsilon)$, $\delta$, $d_{\rm sd}/\epsilon-\delta$ leading to the following values for $\textbf{P2}$'s objective: 
\begin{equation}
 \tilde{Z}_e(d)=
 \begin{cases}
 \label{Z_cond1_ellip}
\frac{d^4_{\rm sd}}{16\epsilon^4}, &d=d_{\rm sd}/(2\epsilon) \\
\delta^2\lb(d_{\rm sd}/\epsilon-\delta\rb)^2, &d=\delta, d_{\rm sd}/\epsilon-\delta
\end{cases}.
\end{equation}
Therefore, the optimal value of $d$ is $d^{\star}=\{\delta, d_{\rm sd}/\epsilon-\delta\}$.

\subsection{Optimal DS Power Allocation}\label{sec_opt_power}
As shown in \eqref{eq_d2d_snr_2ndorder1}, the function $\widehat{\Gamma}_{\rm d}(\cdot)$ increases monotonically with respect to $P_{\rm s}$. 
However, the constraint $\textstyle({\rm C1})$ restricts $P_{\rm s}$ to be above a certain value in 
order to ensure that the D2D-induced interference to CU is lower than a predefined threshold  $I_{\textrm{th}}$.
Hence, the optimal value of $P_{\rm s}$ needs to constrained as follows: 
\begin{align}
\label{eq_ub_Ps}
P^{\rm ub}_{\rm s}&=I^{\textrm{th}}/\mathbb{E}\lb[\lvert h_{\rm sc}\rvert^2\rb].
\end{align}
By using this expression, we conclude that the optimal solution for DS power allocation $P_{\rm s}$ is:
\begin{align}
\label{eq_Ps_opt}
P^{\star}_{\rm s}&\triangleq\min\{P^{\textrm{max}}_{\rm s}, P^{\rm ub}_{\rm s}\}. 
\end{align}

\section{Numerical Results and Discussion} 
\label{sec_results}
In this section, we present computer simulation results that validate the presented analytical expressions and provide useful insights on the proposed optimized RIS-empowered underlaid D2D communication system.

\subsection{Simulation Parameters}\label{sec_sim_param}
Unless otherwise mentioned, we have considered the following system parameters in the performance evaluations: 
$N=50$, $\alpha=0.5$, $M=1$, $d_0=1.05$ m, $d_{\rm sr}=d_{\rm rd}=1.5$m \cite{Ni_Zhu_IRS_D2D_nakagami_perfana}, 
$d_{\rm bd}=300$m, $d_{\rm sc}=250$m, $\eta=2.5$, $d_{\rm sd}=5$ m, $\delta=0.75$ m, $y=0.5$m, $\epsilon=1$ \cite{DM_ICC_2016}, 
$\bar{\gamma}_{\rm b}=28$ dB, $I_{\rm th}=11$ dB, and $P^{\textrm{max}}_{\rm s}=10$ dB, $C$= -$30$ dB at reference distance $1$ m.  
We have considered an omni-directional antenna at the BS. For simulation purposes, the gain of the
transmit and receive antenna are assumed to be unity.

For comparison purposes, we have also simulated the performance of the fixed allocation scheme where $P_{\rm s}=P^{\textrm{max}}_{\rm s}-5$ and $d=d_{\rm sd}-1.5$ m. All numerical results have been obtained 
by averaging over $10^6$ independent fading channel realizations.
\subsection{Validation of the Analysis} \label{sec_result_valid} 
The numerical evaluation of the analytical approximation in \eqref{eq_d2d_snr_2ndorder1} for $\widehat{\Gamma}_{\rm d}$ is illustrated in Fig.~\ref{FIG_valid_plot_approx_SINR_N} varying $\bar{\gamma}_{\rm s}$ in dB for different values of $N$ of the RIS elements. The figure also includes simulation results for the exact $\Gamma_{\rm d}$, which were generated by computing the mean of $10^6$ random realizations of this statistics using \eqref{eq_cdf_conv_square}.  
It can be observed in the figure that the considered Taylor series approximation for the SINR at DU closely matches the equivalent simulated results. In fact, the approximation becomes tight for larger values of $N$. Thus, the analysis provided in Section \ref{subsubsec_approx_Taylor} is validated. Recall from Lemma \ref{lemma_OP_min} that minimizing the OP of the D2D communication pair is equivalent to maximizing the expected value of $\Gamma_{\rm d}$. It is also observed in the figure that, when $N$ increases, the SINR at DU improves. This happens because more RIS reflecting elements result in more reflected signals, which undergo constructive interference at DU. 
     \begin{figure}[!t]
     \centering
         \includegraphics[trim=3cm 9cm 3.9cm 9.5cm, width=0.9\columnwidth, clip]{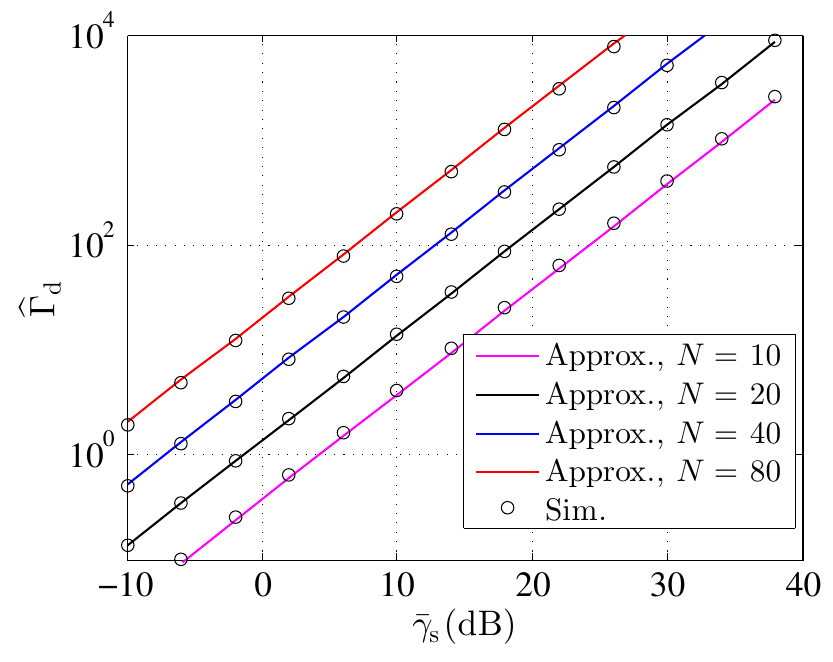}
 \caption{The values of $\widehat{\Gamma}_{\rm d}$ as a function of $\bar{\gamma}_{\rm s}$ in dB considering an RIS with $N$=\{10, 20, 40, 80\} tunable reflecting elements.}         \label{FIG_valid_plot_approx_SINR_N}
\end{figure}
\begin{figure}[!t]
     \centering
         \includegraphics[trim=3cm 9cm 3.8cm 9.5cm, width=0.9\columnwidth, clip]{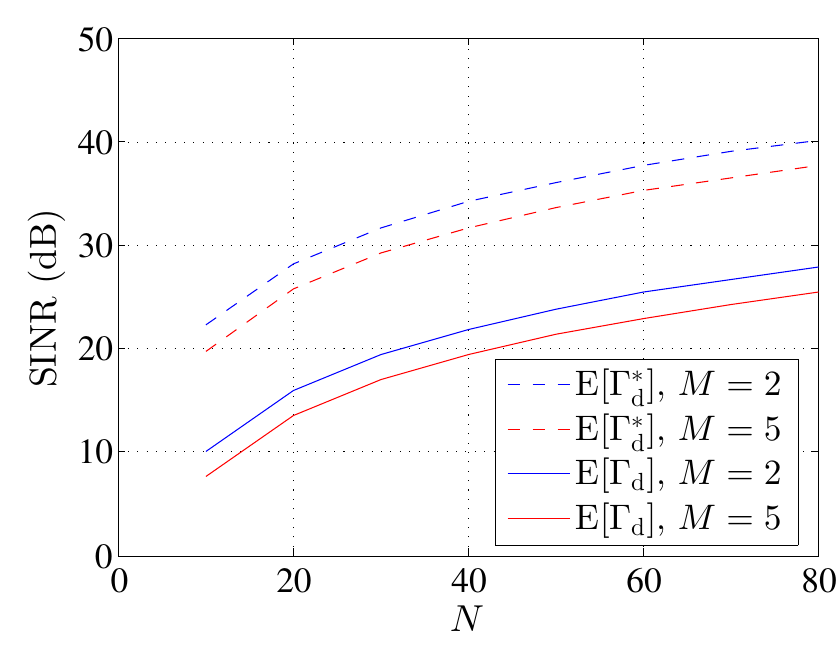}
    \caption{The values of $\Gamma_{\mathrm{d}}$ and $\Gamma_{\mathrm{d}}^{\star}$ as a function of $N$.} \label{FIG_valid_plot_comp_SINR_N}
   \end{figure} 
   
A numerical comparison of the average values of $\Gamma_{\mathrm{d}}$ and $\Gamma_{\mathrm{d}}^{\star}$ for 
different values of $N$, considering $M=\{2,5\}$, is provided in Fig.~\ref{FIG_valid_plot_comp_SINR_N}. The figure includes average value of $10^6$ random realizations of the objective function $\widehat{\Gamma}_{\mathrm{d}}$ using \eqref{eq_d2d_snr_2ndorder1} and the joint optimal solution $\Gamma_{\mathrm{d}}^{\star}$ corresponding to \eqref{eq_Ps_opt}, \eqref{eq_d_opt_parallel}. 
It is observed that $\Gamma_{\mathrm{d}}^{\star}$ outperforms $\Gamma_{\mathrm{d}}$. Also, an increase in the number of antennas at the BS, degrades the SINR performance. Thus the claim in \eqref{eq_joint_opt_compare} is validated. 

\begin{figure}[!t] 
	\centering\hspace{-3mm}
	\subfigure[Mean of $\gamma_{\rm v}$ vs. $M$.]{
		\includegraphics[trim=3cm 9cm 3.8cm 9.5cm, width=0.5\columnwidth, clip]{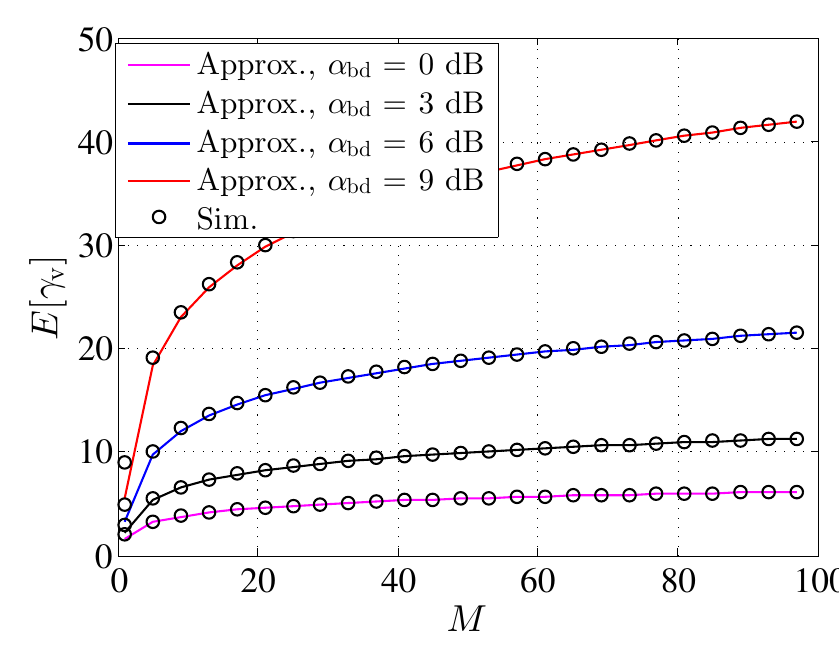} 
		\label{Fig3_validation_plot_gumb_mean}
	}\hspace{-6mm}
	\subfigure[Variance of $\gamma_{\rm v}$ vs. $M$.]{
		\includegraphics[trim=3cm 9cm 3.8cm 9.5cm, width=0.5\columnwidth, clip]{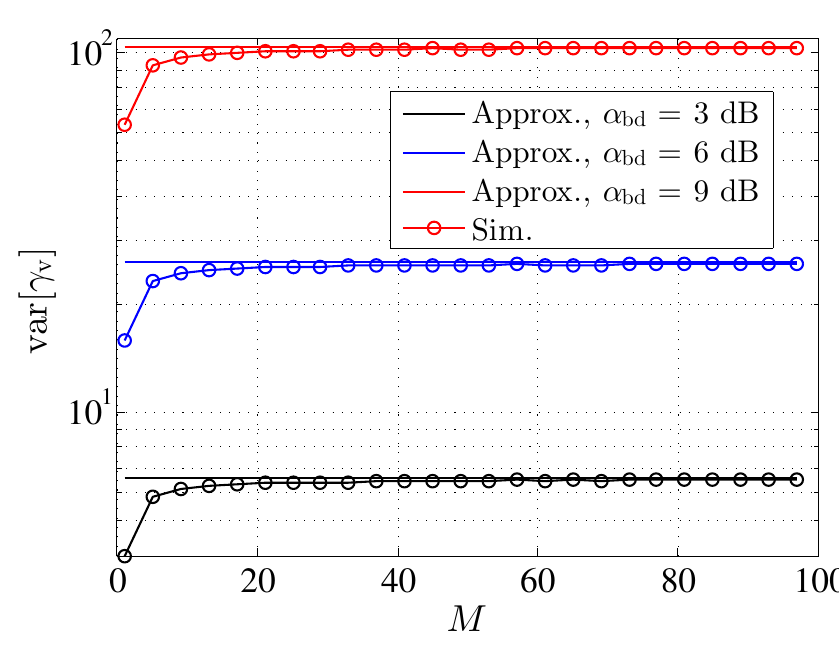}\hspace{-5mm}
		\label{Fig4_validation_plot_gumb_var}
	}
	\caption{The mean and variance of the Gumbel-distributed random variable $\gamma_{\rm v}$ as functions of the number $M$ of BS antennas for different values of $\alpha_{\rm bd}$ in dB.}
\label{Fig_validation_plot_gumb}	
\end{figure}       
In Fig. \ref{Fig_validation_plot_gumb}, we validate the asymptotic mean and variance of the Gumbel-distributed random variable $\gamma_{\rm v}$, as presented in 
Section~\ref{subsubsec_gumbel_approx_gamma_bd}, as functions of the number of BS antennas $M$. In Fig. \ref{Fig3_validation_plot_gumb_mean}, we evaluate 
$\mathbb{E}[\gamma_{\rm v}]$ using \eqref{eq_transformed_1stmoment} for four different values 
of the $\alpha_{\rm bd}$ parameter. It is shown that, as $M$ approaches infinity, the simulated results for the mean closely match with the mean of the Gumbel distribution. Moreover, it is observed that the expected value of $\gamma_{\rm v}$ increases with increasing values of $\alpha_{\rm bd}$. 
In Fig. \ref{Fig4_validation_plot_gumb_var}, we plot ${\rm Var}[\gamma_{\rm v}]$, via the numerical evaluation of \eqref{eq_transformed_var}, versus $M$. 
It is demonstrated that the simulated variance initially improves with increasing $M$, and then, slowly saturates as $M$ approaches infinity. The simulated results match 
closely with the provided approximation for different values of $\alpha_{\rm bd}$. We 
also observe that, for smaller values of $\alpha_{\rm bd}$, the saturation occurs at smaller values of $M$. 
However, as $\alpha_{\rm bd}$ increases, the saturation takes place at larger $M$ values. This behavior is attributed to the resulting improved beamforming gain in the BS-DU link.   

\begin{figure}[!t]
\centering
\includegraphics[trim=3cm 9cm 3.7cm 9.5cm, width=0.9\columnwidth, clip]{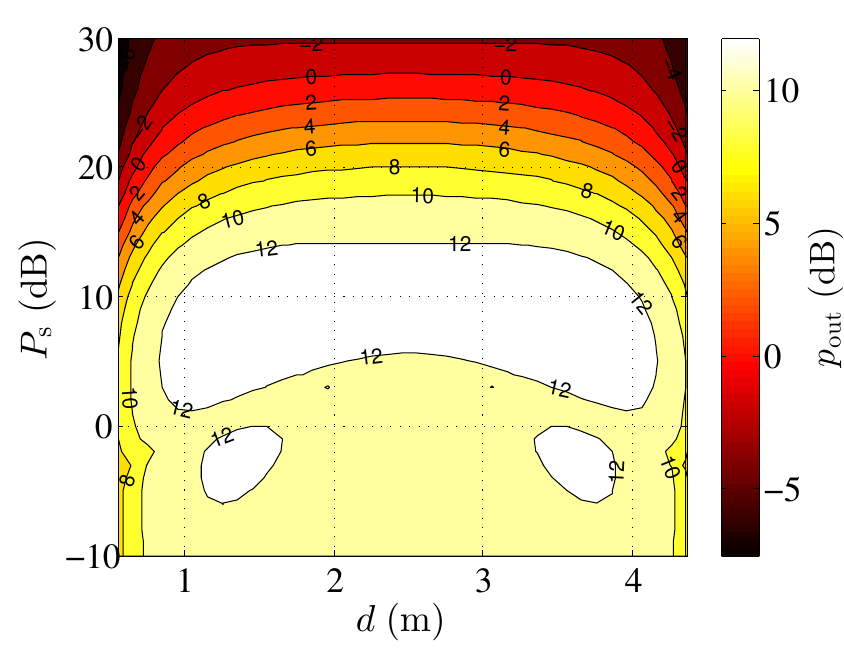}
\caption{OP performance in dB with respect to varying RD values $d$ in meters and DS power allocation values $P_{\rm s}$ in dB.}
\label{Fig_5_validation_plot_contour}
\end{figure}
The role of the parameters $d$ and $P_{\rm s}$ for the RD and the DS power allocation, respectively, in the OP performance of the proposed RIS-aided underlaid D2D communication system is depicted in Fig.~\ref{Fig_5_validation_plot_contour}. The considered values for $d$ and $P_{\rm s}$ are plotted in the $x$- and $y$-axis, respectively, whereas $p_{\rm out}$ values are shown in the contour. As illustrated, for a fixed $d$ value, OP improves with increasing $P_{\rm s}$; this behaviour confirms the claim in Lemma~\ref{lemma_decouple_vars}. 
Additionally, it is shown that, for a fixed $P_{\rm s}$ value, there exist two minima points at the OP, which appear at the corner points of the horizontal axis with the $d$ values; these points are the ones given by the constraints 
$\textstyle({\rm C4})$ and $\textstyle({\rm C5})$. The figure also verifies that the proposed global optimum is the same with the simulated one, thus, validating the 
claims expressed via expressions \eqref{eq_d_opt_parallel} and \eqref{eq_Ps_opt}.

\begin{figure}[!t]
\centering
\includegraphics[trim=3cm 9cm 3.8cm 9.5cm, width=0.9\columnwidth, clip]{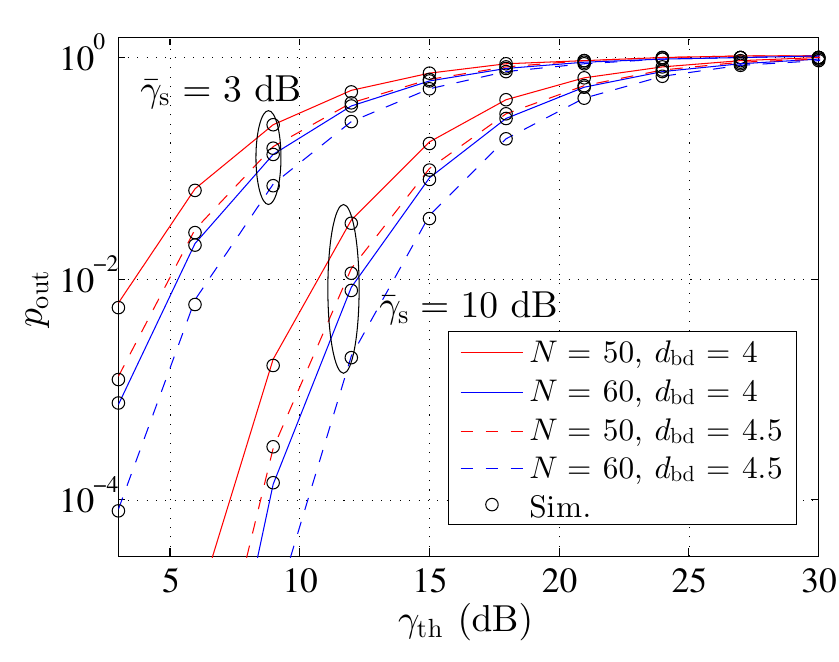}
\caption{OP performance as a function of the SINR threshold $\gamma_{\rm {th}}$ in dB for $d_{\rm bd}= \{4, 4.5\}$ m, $N=\{50, 60\}$, and $\bar{\gamma}_{\rm s} = \{3, 10\}$ dB.}
\label{Fig_6_validation_plot_OP}
\end{figure}
In Fig.~\ref{Fig_6_validation_plot_OP}, we present the OP performance of the proposed D2D system, via the numerical evaluation of \eqref{eq_out_d2d} in Section~\ref{subsec_OP}, as a function of the SINR threshold $\gamma_{\rm {th}}$, considering that $N=\{50, 60\}$, $\bar{\gamma}_{\rm s} = \{3, 10\}$ dB, 
and $d_{\rm bd}= \{3.5, 4.5\}$ m. It is depicted that, for fixed BS-DU link distance $d_{\rm bd}$ and $\gamma_{\rm th}$, the OP improves with increasing $N$. However, for given $N$ and $\gamma_{\rm th}$, a smaller $d_{\rm bd}$ results in increased interference from BS, thus, degrading the OP at DU. Similarly, for a fixed $\gamma_{\rm th}$, when the average SNR at DS, $\bar{\gamma}_{\rm s}$, increases, the OP improves.  

\subsection{Design Insights and Comparisons} 
\begin{figure}[!t]
\centering
\includegraphics[trim=3cm 9cm 3.8cm 9.5cm, width=0.9\columnwidth, clip]{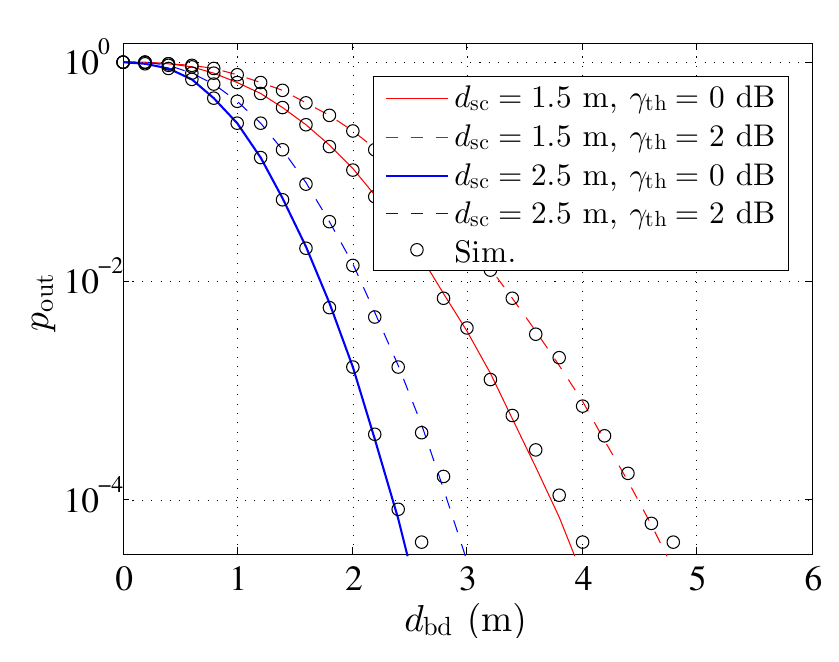}
\caption{OP performance as a function of the BS-DU distance $d_{\rm bd}$ in meters for $d_{\rm sc}=\{1.5, 2.5\}$ m and $\gamma_{\rm {th}}=\{0, 2\}$ dB.}
\label{Fig_7_insight_plot_dSC}
\end{figure}
The following Figs.~\ref{Fig_7_insight_plot_dSC} and~\ref{Fig_7_insight_plot_dSD_dBD} illustrate the OP performance versus the BS-DU link distance $d_{\rm bd}$ and the RD value $d$, respectively. In Fig.~\ref{Fig_7_insight_plot_dSC}, the OP is investigated with respect to varying DS-CU link distance $d_{\rm sc}$ and $\gamma_{\rm th}$, while Fig.~\ref{Fig_7_insight_plot_dSD_dBD} examines the role of the common RIS element amplitude $\alpha$ and $\gamma_{\rm th}$. It can be observed from the former figure that, for given $d_{\rm sc}$ and $\gamma_{\rm th}$, OP improves with increasing $d_{\rm bd}$. Note that, as $d_{\rm bd}$ increases, the pathloss becomes more severe. This results in a weaker interference level at DU, hence, a lower $d_{\rm bd}$ value is obtained. In addition, for fixed $d_{\rm bd}$ and $\gamma_{\rm th}$, OP decreases with increasing $d_{\rm sc}$. This happens because a larger $d_{\rm sc}$ value results in a reduced interference level $I_{\rm b}$. It can be also seen that, for a given $d_{\rm sc}$, a decrease in $\gamma_{\rm th}$ results in OP improvement. 

\begin{figure}[!t]
\centering
\includegraphics[trim=3cm 9cm 3.8cm 9.5cm, width=0.9\columnwidth, clip]{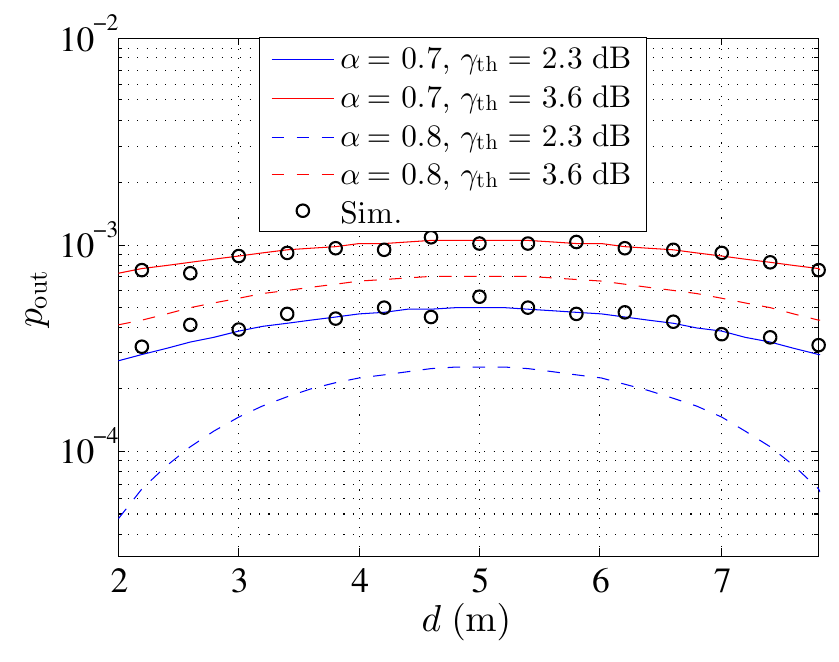}
\caption{OP performance as a function of the RD value $d$ in meters for $\gamma_{\rm {th}}=\{2.3, 3.6\}$ dB and $\alpha=\{0.7, 0.8\}$.}
\label{Fig_7_insight_plot_dSD_dBD}
\end{figure}
Figure~\ref{Fig_7_insight_plot_dSD_dBD} showcases that, for a given $\gamma_{\rm {th}}$, OP improves as $\alpha$ increases. This indicates that a larger reflection coefficient contributes in the quality of the received signal at DU. Moreover, it can be observed that, for a given $\alpha$, the OP performance improves with decreasing $\gamma_{\rm th}$. It is finally shown in this figure that RIS needs to be placed at the corner points to yield minimum OP. This result verifies the finding in \eqref{eq_d_opt_parallel}. 

\begin{figure}[!t]
\centering
\includegraphics[trim=2.9cm 9cm 3.8cm 9.5cm, width=0.9\columnwidth, clip]{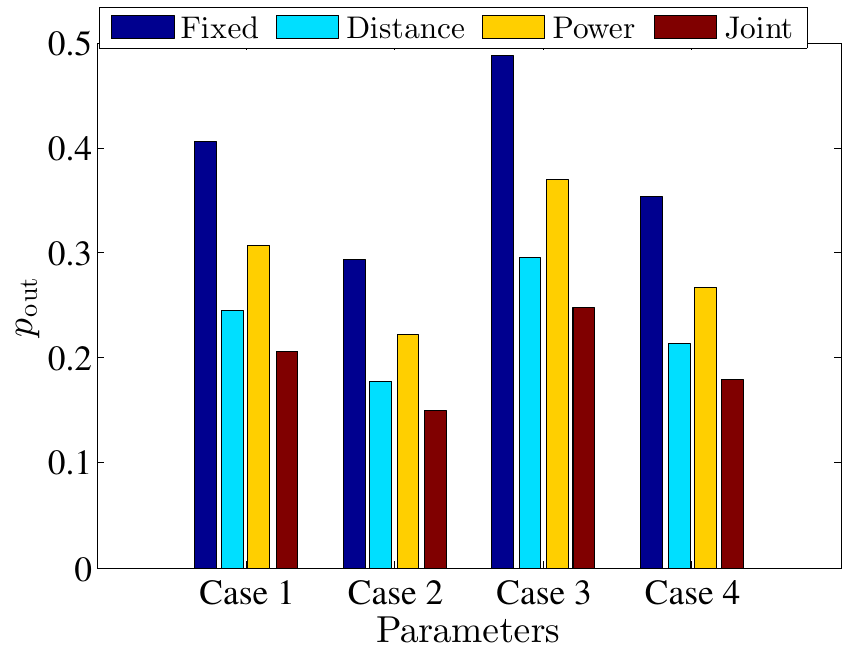}
\caption{OP performance comparison between the proposed jointly optimal scheme and benchmark ones (optimal power allocation with fixed RD and optimal RD with fixed power) for $\gamma_{\rm {th}}=\{0, 2\}$ dB and
$N=\{40, 60\}$.}
\label{FIG_comp_plot_OPT}
\end{figure}
In Fig.~\ref{FIG_comp_plot_OPT}, we compare the OP performance of the proposed jointly optimal design with two semi-adaptive schemes: the optimal power allocation with fixed RD and the optimal RD with fixed power. The performance curves were obtained via the numerical evaluation of \eqref{eq_d_opt_parallel} and \eqref{eq_Ps_opt} for all three schemes and for different combinations of $N$ and $\gamma_{\rm th}$, namely: 
$N=40$ and $\gamma_{\rm {th}}$ = $0$ dB (Case 1); $N=60$ and $\gamma_{\rm {th}}=0$ dB (Case 2); $N=40$ and $\gamma_{\rm {th}}=2$ dB (Case 3); and $N=60$ and $\gamma_{\rm {th}}=2$ dB (Case 4). It can be observed, for our proposed scheme, an improvement of $28$\% from Case 1 to 2, $27$\% improvement from Case 3 to 4, and $16$\% improvement from Case 4 to 2.  
By comparing Cases 1 and 2, it is shown that, for a fixed $\gamma_{\rm th}$, the OP performance degrades as $N$ increases. On the other hand, for a fixed $N$ value, OP improves as $\gamma_{\rm th}$ increases. It is finally showcased that, between the two benchmark schemes, the optimal distance one fares always better than the optimal power one, which is, however, always outperformed by the proposed jointly optimal design. The improvement is $44\%$ and $20\%$ compared to the optimal power and optimal distance schemes, respectively. 

\begin{figure}[!t]
\centering
\includegraphics[trim=2.9cm 9cm 3.8cm 9.5cm, width=0.9\columnwidth, clip]{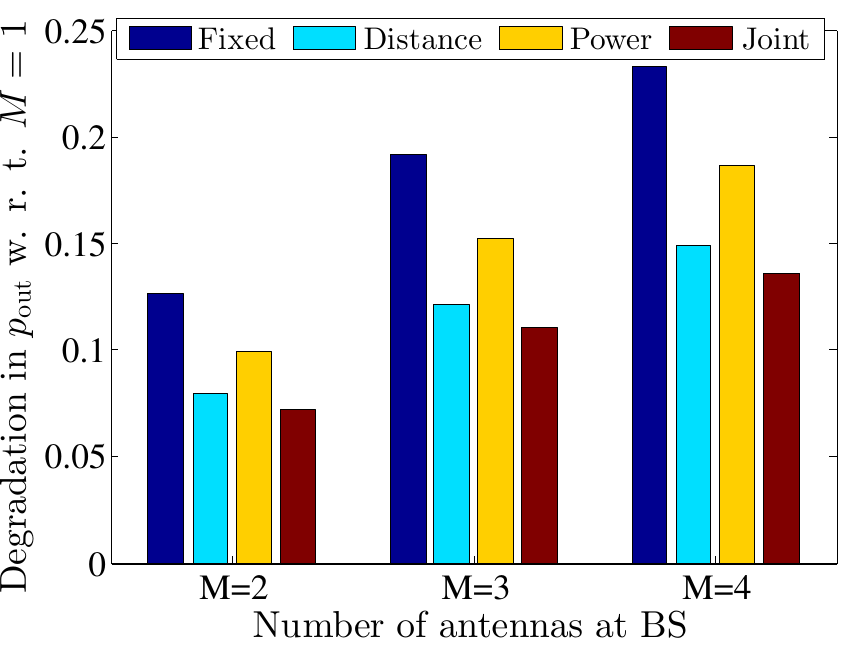}
\caption{OP performance comparison between the joint, and benchmark schemes 
(optimal power allocation with fixed RD and optimal RD with fixed power) for 
$M$ = $\{2, 3, 4\}$ with respect to $M=1$ case.}
\label{FIG_comp_plot_num_ant}
\end{figure} In Fig. \ref{FIG_comp_plot_num_ant}, we plot OP using \eqref{eq_d_opt_parallel}, 
\eqref{eq_Ps_opt} offered by the three optimization schemes for different number of antennas $M$ at the BS. It can be seen that the OP performance degrades with an increase in the number of antennas. As the number of antennas at the primary transmitter increases, the SNR at the primary receiver gets enhanced. However, this is seen as an interfering effect at the D2D receiver. Therefore, the throughput at the receiver decreases. Typically, when $M=4$, we observe a $15\%$, $18\%$ and $13\%$ increase in OP for optimal RD with fixed power, optimal power allocation with fixed RD, and jointly optimal schemes, respectively, as compared to $M=1$ case.

A fair comparison of the present method over the works \cite{Duong2022_RIS_UAV,Poor2020_improper,Q_Wu_JSAC_2020} is not 
relevant as the RIS deployment optimizes different objectives in each work and mathematical tool sets used for solving the 
optimization problems are different. 

\section{Conclusions}
\label{sec_conclude}
In this paper, we presented a novel OP minimization problem for an RIS-empowered underlaid D2D communication system. The proposed problem targeted the jointly optimal design of the power budget at the D2D source node and the RIS deployment. Based on an efficient Taylor series approximation, we presented a closed-form expression for the average SINR at the D2D destination node under Rayleigh fading conditions, which served as an analytical equivalent of the OP performance. 
We demonstrated how the restriction on the interference power and the total power budget affect the optimal power. 
It was also indicated that the optimal location of the RIS occurs at its corner points, as they are given by the respective distance constraints. Our extensive numerical results validated our derived analytical expressions and the optimized system parameters, 
namely, the global optimal distance of the RIS location and the D2D source transmit power. 
It was demonstrated that the proposed jointly optimal scheme can yield $44\%$ and $20\%$ improvement over the considered semi-adaptive optimal power and optimal distance benchmark schemes, respectively. 
In the future, we intend to incorporate multiple D2D pairs into the system model and conduct relevant mathematical analysis, considering also more sophisticated precoding schemes at the BS, like maximal-ratio transmission and zero forcing. Additionally, we will study the secrecy performance when there exist numerous eavesdroppers in a legitimate RIS-empowered underlaid D2D system. 
\bibliographystyle{IEEEtran}
\bibliography{main}
\end{document}